\newtheorem{lemma}{Lemma}
\newtheorem{prop}{Proposition}
\newcommand{\beqa}{\begin{eqnarray}}
\newcommand{\eeqa}{\end{eqnarray}}
\newcommand{\braket}[1]{\langle #1 \rangle}
\renewcommand{\boxed}[2]{\textcolor{#1}{%
\tikz[baseline={([yshift=-1ex]current bounding box.center)}] \node [rectangle, minimum width=1ex,rounded corners,draw] {\normalcolor\m@th$\displaystyle#2$};}}
\newcounter{appsection}
\newcounter{appsubsection}[appsection]
\begin{document}

\title{Gaussian quantum estimation of the lossy parameter in a thermal environment}
\author{Robert Jonsson}
\email{robejons@chalmers.se}
\affiliation{Department of Microtechnology and Nanoscience, Chalmers University of Tech., Göteborg Sweden}
\affiliation{New Concepts and System Studies, Surveillance, Saab, Göteborg Sweden}
\author{Roberto Di Candia}
\email{rob.dicandia@gmail.com}
\affiliation{Department of Communications and Networking, Aalto University, Espoo, 02150 Finland}

\begin{abstract}
Lossy bosonic channels play an important role in a number of quantum information tasks, since they well approximate thermal dissipation in an experiment. Here, we characterize their metrological power in the idler-free and entanglement-assisted cases, using respectively single- and two-mode Gaussian states as probes. In the problem of estimating the lossy parameter, we study the energy-constrained quantum Fisher information (QFI) for generic temperature and lossy parameter regimes, showing qualitative behaviours of the optimal probes. We show semi-analytically that the two-mode squeezed-vacuum state optimizes the QFI for any value of the lossy parameter and temperature. We discuss the optimization of the {\it total} QFI, where the number of probes is allowed to vary by keeping the total energy-constrained. In this context, we elucidate the role of the ``shadow-effect'' for reaching a quantum advantage. We also consider a photon-number normalization for the environment, widely used in the analysis of quantum illumination and quantum reading protocols. With this normalization, we prove that the large bandwidth TMSV state is the optimal probe for any parameter value. Here, the quantum advantage is of at most a factor of $2$, and is reached in the bright environment case for {\it any} lossy parameter values. Finally, we discuss the implications of our results for quantum illumination and quantum reading applications.
\end{abstract}

\maketitle

\section{Introduction}

Lossy channels are important to describe realistic scenarios in all quantum information tasks. A key example is the dissipative bosonic channels~\cite{Petruccione:book}. Assume a bosonic mode interacting with a thermal bath at a certain temperature. How is the quantum state susceptible to the presence of the bath? In other words, how well can we estimate the amount of losses given a certain probe? This question, aside being interesting for calibrating a number of physical setups, is important for  many imaging~\cite{Tsang2016, Nair2016, Lupo2016, Gregory2020}, detection~\cite{Tan2008, Pirandola2011, Lu2018, Barzanjeh2015, LasHeras2017, Sanz2017}, and communication~\cite{Shapiro2009, Bash2015, Rosati2018, Noh2020, DiCandia2018,DiCandia2021} scenarios. 
Quantum information tools based on the quantum Fisher information (QFI) have been developed in a general quantum parameter estimation framework. Mostly, one aims to answer questions about optimality of the input and the measurement. This is indeed challenging when the dynamics are non-unitary, because  the procedure involves computing distances and/or fidelities between mixed quantum states. However, the single loss parameter case is ``simple'' enough, and various aspects have been studied in the literature. Furthermore, the problem can be further simplified if one restricts the analysis to Gaussian probes~\cite{Banchi2015, Safranek2017, Serafini:book}.

There are various contributions tackling different aspects of the lossy parameter estimation problem, see Ref.~\cite{Braun2018} for a review. A first result is given by Sarovar and Milburn, who developed a general theory for finding the optimal estimator given a probe, with an application for the damping channel for a Fock state as input~\cite{Sarovar2006}. Venzl and Freyberger first noticed that the quantum estimation of the loss parameter can be improved using entanglement~\cite{Venzl2007}, but they limit their theory to superposition of coherent states with an unoptimized measurement. Monras and Paris proposed the first complete study of the optimal QFI with a generic Gaussian state input~\cite{Monras2007}. Their study has been extended to  non-Gaussian probes by Adesso \textit{et al.}~\cite{Adesso2009}. All these contributions have been developed in the zero temperature case. An extension of these results to the finite temperature and the entanglement-assisted cases has been advanced in Refs.~\cite{Monras2010,Monras2011}. More recently, a general theory for the estimating {\it multiple} loss parameters in zero temperature bath considering generic non-Gaussian states was recently introduced by Nair~\cite{Nair2018}. Here, the author found that states diagonal in the Fock basis are optimal. The result directly implies that, when restricting to Gaussian probes, two-mode squeezed-vacuum (TMSV) states are optimal for the estimation of the single loss parameter. Finally, extensions to non Gaussian-preserving models have been considered lately by Rossi {\it et al.} in Ref.~\cite{Rossi2016}, where the authors showed that the presence of a Kerr non-linearity can improve the estimation performance, especially at short-interaction times. Despite the numerous literature in the topic, a complete characterization of the optimal states when restricting to the single- and two-mode cases is still missing. 

In this article, we study the QFI for the estimation of the single loss parameter in the case of {\it thermal} channel of generic temperature. We provide analytical results about the optimal probe for {\it any} parameter regime. Indeed, we provide a rigorous analysis of the behaviour of the optimal probe in various energy regimes, for both the idler-free (\textit{i.e.}, single-mode probe) and the entanglement-assisted (or ancilla-assisted) cases. We complement our analytical results with exact numerical calculations. Our results departs from previous analysis, especially from Refs.~\cite{Monras2007, Monras2011,Nair2018}, in the following: (i) In the zero bath-temperature case, we provide analytical results for the behaviour of the optimal single-mode state. In particular, we characterize the requirements for the squeezed-vacuum and coherent states to be optimal, complementing the analysis in Ref.~\cite{Monras2007}. (ii) In the finite bath-temperature case, we show the presence of an abrupt transition  of the optimal probe between squeezed-vacuum and coherent states, at the low-energy regime. This transition disappears when the energy gets higher, and was not shown in Ref.~\cite{Monras2011}. (iii) We provide an analysis of the {\it total} QFI. In the zero-temperature case, we show that squeezed-vacuum states are optimal over a larger value-set of parameters when allowing the number of probes (or the bandwidth) to vary, while keeping the total energy-constrained. We also provide a first proof that the optimal setup consists in distributing the energy either on one probe or on an infinite number of probes, depending on the probe energy. We extend the total QFI analysis to the finite bath-temperature case, by introducing a normalization of the environmental photon-number widely used in quantum illumination and quantum reading protocols. (iv) We show semi-analytically that the TMSV state is optimal for any bath-temperature. This complements the optimality result in Ref.~\cite{Nair2018} for the zero temperature case. We extend the optimality proof for the normalized model given in Ref.~\cite{Nair2020}, showing that the infinite bandwidth TMSV state is an optimal probe for arbitrary values of the lossy parameter. 
Finally, we show the relation to the task of discriminating between two values of the lossy parameter. We discuss the implications of our findings for the performance of two important protocols: quantum illumination~\cite{Tan2008} and quantum reading~\cite{Pirandola2011}. In particular, we discuss the qualitative difference between the normalized and unnormalized models, showing a discrepancy both in the QFI behaviour and the optimal receivers in relevant regimes of the input power and lossy parameters.

The paper is structured in the following way. We first introduce the notations via a Setup and Methods section (Section~\hyperref[sec:II]{II}), where we describe the dissipative bosonic channel and  introduce the Gaussian QFI. We then move to the characterization of the idler-free (or single-mode) strategy, showing a full characterization for the zero and finite temperature cases (Section~\hyperref[sec:III]{III}). In Section~\hyperref[sec:IV]{IV}, we semi-analytically prove that TMSV states are optimal probes for the estimation of the lossy parameter. In Section~\hyperref[sec:V]{V}, we discuss the optimal QFI case, and the relevance of the environment normalization for the QFI. In Section~\hyperref[sec:VI]{VI}, we discuss the implication of our results for quantum hypothesis testing, focusing particularly on the quantum illumination and quantum reading protocols.

\begin{figure}
    \centering
    \includegraphics{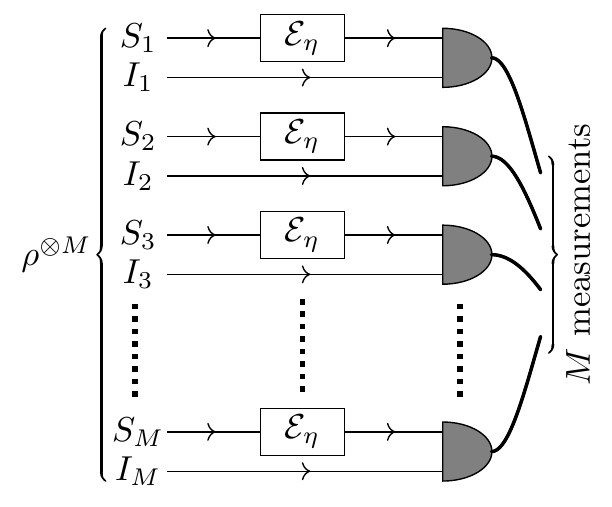}
    \caption{Setup of $M$  probes, each consisting of a Signal and Idler pair, that are used to interrogate the channel $\mathcal{E}_\eta=\mathrm{e}^{-2\ln(\eta)\mathcal{L}}$, where $\mathcal{L}[\rho]=(1+N_B)\mathcal{D}(a_S)[\rho]+N_B\mathcal{D}(a_S^\dag)[\rho]$. Each use of the channel is measured independently and then an estimate of $\eta$ is declared for the collection of results.}
    \label{fig:overview}
\end{figure}

\section{Setup and Methods}\label{sec:II}

\subsection{The lossy bosonic channel}

We consider the bosonic dissipative channel described by the Lindblad generator 
\begin{align}\label{Lindblad_lossy}
\partial_t\rho=\gamma(1+N_B)\mathcal{D}(a)[\rho]+\gamma N_B\mathcal{D}(a^\dag)[\rho],
\end{align}
where $\mathcal{D}(L)[\cdot]=L\cdot L^\dag-\frac{1}{2}\{L^\dag L,\cdot\}$, and $\gamma,N_B\geq0$ are parameters describing the coupling with the bath and the number of noise photons, respectively. This dynamics can be seen in the Heisenberg picture as an attenuation channel, {\it i.e.},
\begin{align}\label{dyn}
    a_S(t)= \eta(t)\,a_S+\sqrt{1-\eta^2(t)}\,h,
\end{align}
where $\eta(t)=\mathrm{e}^{-\gamma t/2}$ is the lossy transmission and $h$ is a thermal mode with $\langle h^\dag h\rangle=N_B$. In the following, we denote the input signal power as $\langle a^\dag a\rangle=N_S$. The channels in Eqs.~\eqref{Lindblad_lossy}-\eqref{dyn} is clearly Gaussian-preserving, as the input-output relation in Eq.~\eqref{dyn} is linear in $a_S$ and $a_S^\dag$. Therefore, the first and second moments of $a_S(t)$ fully characterize the dynamics. In the following, we focus on the value of $\eta(t)$ for a fixed time $t=\bar t$. To simplify the notation, we denote $\eta(\bar t)\equiv \eta$.

It is convenient to work in the covariance matrix formalism. Assume an input composed of a single mode signal ($S$) and an idler ($I$), we use the convention of quadratures ${\bf R} = ({q}_S,{p}_S,{q}_I,{p}_I)^\top$ with the commutator relations $[{R}_i,{R}_j] = \mathrm{i}\Omega_{ij}$, where ${\bf \Omega} = \mathbb{I}_2 \otimes  (\mathrm{i}\sigma_y)$ is the symplectic form. In this convention, the elements of the covariance matrix ${\bf \Sigma}$ are $\Sigma_{ij} = \frac{1}{2}\braket{{R}_i {R}_j+{R}_j{R}_i} - \braket{{R}_i}\braket{{R}_j}$, while the elements of the first-moment vector ${\bf d}$ are $d_i=\langle  R_i\rangle$. The covariance matrix  respects the Heisenberg relation, which can be cast as ${\bf \Sigma}+\mathrm{i}{\bf \Omega}/2\succeq 0$~\cite{Serafini:book}.
The generic signal-idler covariance matrix and first moments can be decomposed as 
%then a phase-insensitive bosonic channel applied to the signal changes the covariance matrix 
${\bf \Sigma}=\begin{bmatrix}
{\bf \Sigma}_S & {\bf \Sigma}_{SI} \\ {\bf \Sigma}_{SI}^\top & {\bf \Sigma}_I
\end{bmatrix}$ and ${\bf d}= \left[
{\bf d}_S^\top, {\bf d}_I^\top
\right]^\top$ respectively, where ${\bf \Sigma}_{S,I,SI}$ are $2\times 2$ matrices and ${\bf d}_{S,I}$ are $2$-dimensional vectors. Here, ${\bf \Sigma}_{S}$ and ${\bf \Sigma}_{S}$ are the covariance matrices of the signal and idler modes, respectively, while ${\bf \Sigma}_{SI}$ is their cross-correlations. The output of the channel in Eq.~\eqref{Lindblad_lossy} can be written as 
\begin{align}
 \tilde {\bf d}(\eta) &=  \left[\begin{matrix}
\eta {\bf d}_S \\ {\bf d}_I
\end{matrix}\right]\label{eq:displdyn} \\
\tilde {\bf \Sigma}(\eta) &= \begin{bmatrix}
\eta^2 {\bf \Sigma}_S+y(\eta) \mathbb{I}_2 & \eta {\bf \Sigma}_{SI} \\ \eta {\bf \Sigma}_{SI}^\top & {\bf \Sigma}_I
\end{bmatrix},\label{eq1}
\end{align}
where $y(\eta)=\left(1-\eta^2 \right)\left(N_B+\frac{1}{2}\right)$. Notice that the relation $2y(\eta)\geq |1-\eta^2|$ ensures that the channel is physical. The idler-free case is given by setting ${\bf \Sigma}_{SI}=\begin{bmatrix}
0 & 0 \\ 0& 0
\end{bmatrix}$, which ensures that the signal and the idler are uncorrelated.

\subsection{Gaussian quantum Fisher information}

In the task of estimating the parameter $\eta$ an experimentalist prepares $M$ copies of an idler-signal system, obtaining as output $M$ copies of the state $\rho(\bar t)$. The experimentalist measures an observable $O$. The induced signal-to-noise ratio (SNR) is defined as 
\begin{equation}\label{SNRdef}
S_\eta[O]=\frac{[\partial_\eta \langle O\rangle_\eta]^2}{\Delta O^2_\eta},
\end{equation}
where $\Delta  O^2_\eta=\langle O^2\rangle_\eta-\langle O\rangle_\eta^2$ and the index $\eta$ indicates the expectation value computed on the state $\rho(\bar t)$. The SNR computed at $\eta=\eta_0$ should be interpreted as the precision achievable for estimating the parameter $\eta$ when its value is close to $\eta_0$, through the relation $\Delta \hat \eta^2_{|\eta\simeq \eta_0}\simeq [M \times S_{\eta_0}]^{-1}$, where $\Delta^2\hat \eta$ is the variance of the estimator $\hat \eta$. Generally speaking, if an experimentalist is able measure a set of observables $\{O_i\}$, they would like to maximize the SNR with respect to this set in order to obtain a better precision rate (call $O_{\rm max}$ the maximizing observable). This, in principle, requires the prior knowledge of $\eta_0$. If this knowledge is not provided, then they can implement a two-step adaptive protocol, where first they measure an observable $A\in\{O_i\}$ such that the function $f(\eta)=\langle A\rangle_\eta$ is invertible in the range of values where $\eta$ belongs, obtaining a first order estimation of  $\eta_0$. Then they find and measure $O_{\rm max}$. The ultimate value of the SNR, {\it i.e.}, $I_\eta = \max_{O} S_\eta[O]$, is the QFI. As already mentioned, the QFI is related to the achievable uncertainty by an unbiased estimator $\hat \eta$ of the parameter $\eta$ via the Cramér-Rao bound: $\Delta^2\hat \eta\geq (\mathcal{I}_\eta)^{-1}$, where  $\mathcal{I}_\eta\equiv MI_\eta$ is the total QFI. 

Since the output state of the channel is Gaussian, it can be represented by the covariance matrix $\tilde {\bf \Sigma}(\eta)$ and the first-moment vector $\tilde {\bf d}(\eta)$. The QFI on this Gaussian manifold is given by~\cite{Safranek2017,Serafini:book}
\begin{equation}
	I_\eta = \mathrm{Tr}\left\{ {\bf L}_2 
\partial_\eta 
	\tilde {\bf \Sigma} \right\} + (\partial_\eta
	\tilde {\bf d})^\top
	\tilde{\bf \Sigma}^{-1}(\partial_\eta \tilde {\bf d}),\label{eq:QFIgeneral}
\end{equation}
where $\mathbf{L}_2$ is the quadratic form of the symmetric logarithmic derivative (SLD), and 	$\tilde{\bf \Sigma}^{-1}(\eta)$ is the pseudoinverse of $	\tilde{\bf \Sigma}(\eta)$. The SLD is the solution to the equation $4 \tilde {\bf \Sigma}{\bf L}_2\tilde {\bf \Sigma} + 
	{\bf \Omega}{\bf L}_2{\bf \Omega}   
= 2\,	\partial_\eta \tilde {\bf \Sigma}$. 
In the following, to simplify the notation, we will simply drop the $\eta$ dependence of the covariance matrix and the first-moment vector.

In the idler-free protocol, QFI can alternatively be expressed, for the  the single-mode case~\cite{,Serafini:book}, as
\begin{equation}\label{QFIsingle}
I_\eta = \frac{ \text{Tr} \left\{ \left( \tilde {\bf \Sigma}^{-1} \partial_\eta \tilde {\bf \Sigma} \right)^2 \right\}}{2 (1+\mu^2)}\ +\frac{2 (\partial_\eta \mu)^2 }{1-\mu^4}+ (\partial_\eta \tilde{\bf d})^\top\tilde{\bf \Sigma}^{-1}(\partial_\eta \tilde{\bf d}),
\end{equation}
where $\mu(\eta) = \left[4 \det \tilde{\bf\Sigma}(\eta)\right]^{-1/2}$ is the purity of the single-mode quantum state. 

Since we are considering the estimation of a parameter embedded in a completely positive and trace preserving map, the QFI is convex~\cite{Fujiwara2001}, and therefore maximized by a pure-state input. We will then consider pure-state for both the idler-free and and entanglement-assisted strategies. Finally, we notice that the QFI of $\eta$ can be used to compute the ultimate precision limit for the estimation of $\gamma$ via the relation $I_\gamma (\gamma)=\frac{t^2}{4}\mathrm{e}^{-\gamma t}I_\eta(\eta = \mathrm{e}^{-\gamma t/2})$.

In the following, we will denote  $I_\eta$, $I_\eta^{\rm IF}$ and $I_\eta^{\rm EA}$ as the QFIs for a generic multi-mode, single-mode and two-mode states, respectively. We will denote the zero temperature case ($N_B=0$) with the suffix ``(0)''. For instance, $I_\eta^{\rm IF, (0)}$ is the generic idler-free (or single-mode) QFI for $N_B=0$.

\section{idler-free protocol}\label{sec:III}

In this section, we discuss the performance of the idler-free (or single-mode) protocol. Part of the discussion is a review of some of the results of Refs.~\cite{Monras2007,Monras2011,Nair2018} with our notations. We separately discuss  the $N_B=0$ and $N_B>0$ cases. Our novel results consist in a characterization of the optimal probe for finite and infinite $N_S$. In particular:
\begin{itemize}
    \item In the $N_B=0$ case, we characterize the transition between the squeezed-vacuum state and a displaced squeezed state as optimal probe. In addition, we provide the conditions for the coherent states to be the optimal probe. 
    \item In the $N_B>0$ case, we characterize an additional transition of the optimal probe happening for sufficiently low $N_S$: from squeezed-vacuum to coherent state. We show that, similarly to the $N_B=0$ case, a displaced squeezed state with an infinitesimal squeezing is the optimal probe in the asymptotic regime ($N_S\gg1$). We also provide the scaling of the optimal squeezing, generalizing the result of Ref.~\cite{Monras2007} to generic temperatures. 
    \item We compute how the simple homodyne detection performs for generic parameter values, showing that it does not realize the $(1-\eta^2)^{-1}$-scaling of the optimal QFI. This means that photon counting is needed to achieve the optimal precision in the $1-\eta\ll1$ regime.
\end{itemize}

\subsection{Parametrization}
In the idler-free protocol, $M$ independent copies of a {\it single}-mode state are sent as input of the channel. A generic Gaussian single-mode state can be parametrized as
\begin{align}\label{single}
    \mathbf{d}_S &= \begin{bmatrix}
     q \\ p
    \end{bmatrix},\\
    {\bf \Sigma}_S &= \begin{bmatrix}
     ar & 0 \\ 0 & ar^{-1}
    \end{bmatrix}.
\end{align}
Here, $a\geq 1/2$ and $r>0$ ensure that the state is physical: $r=1$ means no squeezing, while $r\to0$ ($r\to\infty$) corresponds to infinite squeezing (amplification). Since the QFI is convex, it is maximized for a pure input-state~\cite{Fujiwara2001}. Therefore, we set $a=1/2$, where only squeezing and displacement play a role. 

Let us denote the total number of signal photons by $N_S=N_{\rm coh}+N_{\rm sq}$, where $N_{\rm coh}=(p^2+q^2)/2$ is the displacement contribution, and $N_{\rm sq}=(r+r^{-1}-2)/4$ is the squeezing contribution. The quadratures can be parametrized as $q=\sqrt{2N_{\rm coh}}\cos\theta$ and $p=\sqrt{2N_{\rm coh}}\sin\theta$. Moreover, we have that $r=1+2N_{\rm sq}-2\sqrt{N_{\rm sq}\left(N_{\rm sq}+1\right)}$, where we have imposed that $r\in(0,1]$. This allows to write the QFI in terms of $N_{\rm sq}$ and $N_{\rm coh}$. 
The general estimation strategy consists in using a properly optimized displaced squeezed state as probe. Therefore, as a further step, we consider the parametrization defined by $N_{\rm sq} = \xi N_S$ and $N_{\rm coh} = N_S\left(1-\xi\right)$, where $\xi\in [0,1]$ is the ratio of squeezed photons to the total number of signal photons. We will denote as $\xi^{\rm opt}$ the ratio optimizing the QFI. 

The idler-free QFI $I_\eta^{\rm IF}$ can be now computed using Eq.~\eqref{QFIsingle} and evaluated with a symbolic computation software. The following Lemma notably simplifies the analysis.
\begin{lemma}
The displacement angle optimizing the single-mode QFI for any parameter values is $\theta = n\pi$, with $n\in\mathbb{N}$. 
\end{lemma}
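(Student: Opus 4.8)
The plan is to exploit the fact that the displacement angle $\theta$ enters the single-mode QFI of Eq.~\eqref{QFIsingle} only through the first-moment term. First I would note that, with the parametrization of Eq.~\eqref{single} and $a = 1/2$, the output covariance matrix of Eq.~\eqref{eq1} is diagonal, $\tilde{\bf \Sigma} = \mathrm{diag}(A, B)$ with $A = \eta^2 r/2 + y(\eta)$ and $B = \eta^2/(2r) + y(\eta)$, and carries \emph{no} dependence on $\theta$. Consequently the purity $\mu = [4\det\tilde{\bf \Sigma}]^{-1/2}$, and hence the first two terms of Eq.~\eqref{QFIsingle}, are $\theta$-independent, isolating the entire $\theta$-dependence into the displacement term $(\partial_\eta \tilde{\bf d})^\top \tilde{\bf \Sigma}^{-1}(\partial_\eta \tilde{\bf d})$.

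Next I would evaluate that term. Since $\tilde{\bf d} = \eta\, \mathbf{d}_S$ from Eq.~\eqref{eq:displdyn}, we have $\partial_\eta \tilde{\bf d} = \mathbf{d}_S = (q, p)^\top$, so the term equals $q^2/A + p^2/B$. Substituting $q = \sqrt{2N_{\rm coh}}\cos\theta$ and $p = \sqrt{2N_{\rm coh}}\sin\theta$ yields $2N_{\rm coh}\left(\cos^2\theta/A + \sin^2\theta/B\right)$, which I would rewrite as $2N_{\rm coh}\left[1/B + \cos^2\theta\,(1/A - 1/B)\right]$. This is affine in $\cos^2\theta \in [0,1]$, so the maximum sits at an endpoint fixed by the sign of $1/A - 1/B$.

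To fix that sign I would use the convention $r \in (0,1]$ adopted in the parametrization. A one-line computation gives $B - A = (\eta^2/2)(1/r - r) = (\eta^2/2)(1-r^2)/r \geq 0$, so $A \leq B$ and therefore $1/A - 1/B \geq 0$. The coefficient of $\cos^2\theta$ being nonnegative, the maximum is attained at $\cos^2\theta = 1$, i.e.\ $\theta = n\pi$ with $n \in \mathbb{N}$. Physically this just says the displacement should align with the squeezed (low-variance) quadrature, which under $r \leq 1$ is the $q$-axis.

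The only point needing care—rather than a genuine obstacle—is checking that $\tilde{\bf \Sigma}$ truly decouples from $\theta$; once that is in hand the optimization collapses to the linear argument above. In the degenerate case $A = B$ (no squeezing, $r = 1$) the QFI is $\theta$-independent, so $\theta = n\pi$ remains optimal non-strictly, consistent with the statement.
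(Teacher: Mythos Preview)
Your proof is correct and is essentially the same as the paper's: both isolate the $\theta$-dependence in the displacement term $(\partial_\eta\tilde{\bf d})^\top\tilde{\bf \Sigma}^{-1}(\partial_\eta\tilde{\bf d})$ and compare its value at $\theta=n\pi$ to a generic $\theta$. The paper simply states the resulting difference,
\[
I_\eta^{\rm IF}(\theta=n\pi)-I_\eta^{\rm IF}=\frac{4\eta^2(1-r^2)}{(\eta^2+2ry)(r\eta^2+2y)}\,N_{\rm coh}\sin^2\theta\geq 0,
\]
which is exactly what your expression $2N_{\rm coh}\sin^2\theta\,(B-A)/(AB)$ reduces to once you substitute $A=\eta^2 r/2+y$ and $B=\eta^2/(2r)+y$; your version just makes the intermediate steps (diagonality of $\tilde{\bf\Sigma}$, $\theta$-independence of the covariance terms, and the sign argument via $r\le 1$) explicit.
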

\begin{proof}
We have that
\begin{equation}
    I_\eta^{\rm IF}(\theta=n\pi) - I_\eta^{\rm IF} = \frac{4\eta^2 \left(1-r^2\right)}{\left(\eta^2+2ry\right)\left(r\eta^2+2y\right)}N_{\rm coh}\sin^2(\theta),
    \end{equation}
    where %$x=\eta^2$ and 
    $y=(1-\eta^2)(N_B +1/2)$. This quantity is non-negative for any parameter values and is zero for $\theta = n\pi$.
\end{proof}

In the following, we consider solely probes displaced along the optimized angle $\theta_{\rm opt}=n\pi$, and denote for simplicity $I_\eta^{\rm IF}\equiv I_\eta^{\rm IF}(\theta_{\rm opt})$.
Notice that even if finding the optimal probe for a given channel in the energy-constrained case is now brought to a one-variable optimization problem, it remains still an highly parametrized problem. Understanding the relevant asymptotic regimes is crucial to fully characterize the QFI.

\subsection{The zero temperature case: $N_B=0$}
This case has been studied in Refs.~\cite{Monras2007,Monras2011} in the Gaussian case. Here, we derive novel analytical results for the optimal states in the energy-constrained case. In this case, the QFI takes a relatively simple form: 
\begin{align}\label{QFI_NB0}
    I_\eta^{\rm IF, (0)}&=4N_S\left\{\frac{1-\xi}{1-2\eta^2\left(\sqrt{\xi N_S(1+\xi N_S)}-\xi N_S\right)}
+\frac{\xi\left[(1-\eta^2)^2+\eta^4\right]}{(1-\eta^2)(1+2\xi N_S\eta^2(1-\eta^2))}\right\}.
\end{align}
Our task consists in finding $\xi$ that optimizes $I_\eta^{\rm IF, (0)}$ for given values of $N_S$ and $\eta$. This problem can be solved numerically for arbitrary parameter values, see Fig.~\ref{fig:xi_surface}. However, we seek to find the analytical behaviour of the optimal probe.
Let us first state a simple bound on the QFI, that will be useful in the discussion.
\begin{lemma}~{\bf \cite{Nair2018}}\label{bound}
The  QFI for a generic multi-mode probe is bounded by $I_\eta^{\rm (0)}\leq \frac{4N_S}{1-\eta^2}$ for any $\eta\in[0,1)$.
\end{lemma}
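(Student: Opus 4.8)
The plan is to derive the bound from a unitary (Stinespring) dilation of the channel combined with the monotonicity of the QFI under a parameter-independent CP map. At zero temperature the thermal mode $h$ in Eq.~\eqref{dyn} is in the vacuum, so the loss channel is a pure-loss channel that can be realized by a passive beamsplitter unitary $U(\eta)$ coupling the signal $a_S$ to a single environment mode $e$ prepared in vacuum, acting as $U^\dagger a_S U = \eta\, a_S + \sqrt{1-\eta^2}\, e$. Setting $\eta=\cos\phi$ this is $U(\phi)=\exp[\phi(a_S^\dagger e - a_S e^\dagger)]$. The global state $\ket{\Psi(\eta)} = U(\eta)\,\ket{\psi}_{SI}\otimes\ket{0}_e$ is pure, and tracing out $e$ returns exactly the actual channel output $\rho(\eta)$ on the signal-idler system. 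Since discarding $e$ is a fixed (parameter-independent) CP map, the same monotonicity/convexity property used after Eq.~\eqref{QFIsingle} and cited in Ref.~\cite{Fujiwara2001} gives $I_\eta^{(0)}\leq I_\eta^{\Psi}$, the QFI of the purification.

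For a pure state generated by a unitary one has $I_\phi^{\Psi}=4\,\mathrm{Var}(\mathcal{H}_\phi)$, where $\mathcal{H}_\phi = \mathrm{i}\,(\partial_\phi U)\,U^\dagger$ is the Hermitian generator; because the beamsplitter generator $K=a_S^\dagger e - a_S e^\dagger$ satisfies $\comm{K}{U}=0$, we get $\mathcal{H}_\phi = \mathrm{i}K$ independent of $\phi$, and the variance can be evaluated directly in the input state $\ket{\psi}_{SI}\otimes\ket{0}_e$. Since $e$ is in the vacuum and uncorrelated with $S$ and $I$, we have $\moy{\mathcal{H}_\phi}=0$, and a short ordering computation using $\moy{e\,e^\dagger}=1$, $\moy{e^\dagger e}=\moy{e^2}=0$ yields $\mathrm{Var}(\mathcal{H}_\phi)=\moy{a_S^\dagger a_S}=N_S$. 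The crucial point is that this depends on the probe only through the signal photon number: displacement, squeezing, and any signal-idler entanglement all cancel, so $I_\phi^{\Psi}=4N_S$ for every input.

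Finally, the chain rule $I_\eta^{\Psi}=(d\phi/d\eta)^2\,I_\phi^{\Psi}=I_\phi^{\Psi}/\sin^2\phi = 4N_S/(1-\eta^2)$ gives $I_\eta^{(0)}\leq 4N_S/(1-\eta^2)$, reproducing the bound attributed to Ref.~\cite{Nair2018}. Note that the purified QFI saturates this value exactly, so the inequality for the reduced state is entirely due to the discarded environment.

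I expect the main obstacle to be the careful verification that $\mathrm{Var}(\mathcal{H}_\phi)$ collapses exactly to $N_S$, i.e.\ keeping the operator orderings and the vacuum expectations of $e$ straight so that displacement, squeezing, and the signal-idler cross-correlations drop out, together with the bookkeeping of the $\phi\to\eta$ reparametrization factor $1/\sin^2\phi=1/(1-\eta^2)$. The monotonicity under partial trace and the pure-state formula $I=4\,\mathrm{Var}(\mathcal{H})$ are standard and can be invoked directly.
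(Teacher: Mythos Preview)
The paper does not actually prove this lemma; it is stated with a bare citation to Ref.~\cite{Nair2018} and used as a black-box bound throughout. So there is no ``paper's own proof'' to compare against.

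Your argument is correct and is essentially the standard one (and the one underlying Ref.~\cite{Nair2018}): dilate the pure-loss channel to a beamsplitter unitary on signal plus a vacuum ancilla, use monotonicity of the QFI under the parameter-independent partial trace over the environment, and compute the pure-state QFI $4\,\mathrm{Var}(\mathrm{i}K)$ of the beamsplitter generator $K=a_S^\dagger e-a_Se^\dagger$ in the input $\ket{\psi}_{SI}\otimes\ket{0}_e$, which indeed collapses to $N_S$ by the vacuum moments of $e$. The reparametrization $\eta=\cos\phi$ then gives the factor $(1-\eta^2)^{-1}$. One small caveat: the property you invoke is the data-processing inequality for the QFI (monotonicity under a fixed CPTP map), which is not quite the ``convexity'' statement cited after Eq.~\eqref{QFIsingle}; both are standard, but they are distinct, so it would be cleaner to cite monotonicity directly rather than pointing to the convexity discussion in the paper.
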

%\begin{proof}
%From Eq.~\eqref{QFI_NB0}, by using that $\sqrt{\xi N_S(1+\xi N_S)}-\xi N_S\leq \frac{1}{2}$ and $\xi N_S\geq 0$, we find that 
%\begin{align}
% I_\eta^{\rm IF, (0)}&\leq \frac{4N_S}{1-\eta^2}\left\{1-\xi+\xi\left[(1-\eta^2)^2+\eta^4\right]\right\}.
%\end{align}
%Since $(1-\eta^2)^2+\eta^4\leq 1$, the right-hand side is maximal for $\xi=0$, which gives the claimed bound. 
%\end{proof}

Generally speaking, both displacement and squeezing are essential for achieving optimality. However, it is interesting to look for the regimes where squeezing or displacement alone are the optimal probes. In Fig.~\ref{fig:xi_surface} we can see a transition between $\xi^{\rm opt}=1$ and $\xi^{\rm opt}<1$. The following proposition characterizes this transition.

\begin{prop}{\bf [Squeezed-vacuum state as optimal probe ($N_B=0$)]}\label{propNS}
$\xi^{\rm opt}=1$ if and only if $N_S\leq \bar N_S(\eta)$. Here, 
\begin{align}
    \bar N_S(\eta)= \left\{ \begin{array}{ll}
            0 & \quad \eta\leq 1/\sqrt{2} \\
            \bar N_S^{(0)}(\eta) & \quad \eta>1/\sqrt{2},
        \end{array} \right.
\end{align}
where $\bar N_S^{(0)}(\eta)$ is the only zero of $f_1(\eta,N_S)=\frac{1}{4N_S}\left(\partial_\xi I_\eta^{\rm IF, (0)}\right)_{|\,\xi=1}$.
\end{prop}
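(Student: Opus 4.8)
The plan is to recast the statement as a boundary analysis of the one-variable optimisation $\max_{\xi\in[0,1]} I_\eta^{\rm IF,(0)}(\xi)$ at fixed $N_S,\eta$, and to reduce it entirely to the sign of $f_1$. First I would note that, since $\xi=1$ is the right endpoint of the feasible interval, the first-order optimality condition for $\xi^{\rm opt}=1$ is $\left(\partial_\xi I_\eta^{\rm IF,(0)}\right)_{|\,\xi=1}\ge 0$, i.e.\ $f_1(\eta,N_S)\ge0$; this already gives the ``only if'' direction for free, because if the derivative at the endpoint were negative then moving to $\xi=1-\epsilon$ would strictly increase the QFI. The substance of the proof is therefore to show that this necessary condition is also sufficient, and then to translate the inequality $f_1\ge0$ into the threshold condition $N_S\le\bar N_S(\eta)$.

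For sufficiency I would establish that $I_\eta^{\rm IF,(0)}(\xi)$ is unimodal (quasi-concave) on $[0,1]$: its $\xi$-derivative changes sign at most once, and only from $+$ to $-$. Granting this, $f_1\ge0$ forces the single peak to lie at $\xi\ge1$, so $I_\eta^{\rm IF,(0)}$ is nondecreasing throughout $[0,1]$ and the maximum is attained at $\xi=1$; conversely $f_1<0$ places the peak strictly inside $(0,1)$, whence $\xi^{\rm opt}<1$. Unimodality is the crux, and I expect it to be the main obstacle. I would prove it by differentiating the two terms of Eq.~\eqref{QFI_NB0}, clearing the (manifestly positive) denominators, and showing that the numerator of $\partial_\xi I_\eta^{\rm IF,(0)}$ admits at most one zero on $(0,1)$ --- most cleanly by passing to the monotone variable $u=\sqrt{\xi N_S(1+\xi N_S)}$, under which the stationarity equation becomes polynomial and its admissible roots can be counted.

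The remaining steps are explicit. Differentiating Eq.~\eqref{QFI_NB0} and setting $\xi=1$ yields
\begin{equation}
f_1(\eta,N_S)=-\frac{1}{1-2\eta^2\!\left(\sqrt{N_S(1+N_S)}-N_S\right)}+\frac{(1-\eta^2)^2+\eta^4}{(1-\eta^2)\left(1+2N_S\eta^2(1-\eta^2)\right)^2}.
\end{equation}
I would then show that $f_1$ is strictly decreasing in $N_S$ for fixed $\eta$: the map $N_S\mapsto\sqrt{N_S(1+N_S)}-N_S$ is increasing (its derivative is positive since $(1+2N_S)^2>4N_S(1+N_S)$) and bounded by $1/2$, so the first denominator decreases from $1$ to $1-\eta^2>0$ and the first term decreases, while the second term is a positive constant over an increasing square and hence also decreases. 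Evaluating the endpoints gives $f_1(\eta,N_S)\to-1/(1-\eta^2)<0$ as $N_S\to\infty$ and
\begin{equation}
\lim_{N_S\to0}f_1(\eta,N_S)=\frac{\eta^2\,(2\eta^2-1)}{1-\eta^2}.
\end{equation}

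The conclusion follows from the sign of this limit. For $\eta\le1/\sqrt2$ the limit is $\le0$, so strict monotonicity gives $f_1(\eta,N_S)<0$ for every $N_S>0$; hence $\xi^{\rm opt}<1$ at any positive energy and $\bar N_S(\eta)=0$. For $\eta>1/\sqrt2$ the limit is strictly positive while $f_1\to-1/(1-\eta^2)<0$, so strict monotonicity produces a unique zero $\bar N_S^{(0)}(\eta)>0$ with $f_1\ge0\iff N_S\le\bar N_S^{(0)}(\eta)$. Combined with the equivalence $\xi^{\rm opt}=1\iff f_1\ge0$ established above, this is exactly the asserted characterisation of $\bar N_S(\eta)$.
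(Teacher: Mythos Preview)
Your approach is essentially the paper's: reduce $\xi^{\rm opt}=1$ to the sign of $f_1$, then show $f_1$ is strictly decreasing in $N_S$ with the stated endpoint values. Two remarks. First, where you propose to prove \emph{unimodality} of $I_\eta^{\rm IF,(0)}$ in $\xi$ (and only sketch the argument via the substitution $u=\sqrt{\xi N_S(1+\xi N_S)}$), the paper instead establishes the stronger property of \emph{concavity} in $\xi$ for $\eta\neq0$ by computing $\partial_\xi^2 I_\eta^{\rm IF,(0)}$ explicitly and checking its sign (Appendix~A2); this gives the equivalence $\xi^{\rm opt}=1\iff f_1\ge0$ in one stroke and avoids any root-counting. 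Second, your elementary monotonicity argument for $f_1$ (term by term, using that $N_S\mapsto\sqrt{N_S(1+N_S)}-N_S$ is increasing and bounded by $1/2$) is cleaner than the paper's route, which simply writes out $\partial_{N_S}f_1$ and observes it is negative. Your value $\lim_{N_S\to0}f_1=\eta^2(2\eta^2-1)/(1-\eta^2)$ is in fact what the explicit formula in Appendix~A2 gives; the main text drops the $\eta^2$ prefactor, but this is immaterial since only the sign matters for locating the threshold at $\eta=1/\sqrt{2}$.
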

\begin{proof}
In Appendix~\hyperref[app:A2]{A2} we show that $I_\eta^{\rm IF, (0)}$ is concave in $\xi$, provided that $\eta\not=0$, see Appendix~\hyperref[app:A2]{A2}. This means that $\xi=1$ is a maximum point if and only if  $f_1(\eta,N_S):=\frac{1}{4N_S}\left(\partial_\xi I_\eta^{\rm IF, (0)}\right)_{|\,\xi=1}\geq0$. 
We have that $f_1$ has at most one zero, as its derivative in $N_S$ is negative everywhere, see Appendix~\hyperref[app:A2]{A2}. Since $f_1\to-\frac{1}{1-\eta^2}<0$ for $N_S\to \infty$, we have that the zero $\bar N_S^{(0)}(\eta)$ is positive only if $f_1(\eta,N_S=0)=\frac{2\eta^2-1}{1-\eta^2}$, is positive, \textit{i.e.}, for some $\eta\in\left(\frac{1}{\sqrt{2}},1\right)$.
\end{proof}
Proposition~\ref{propNS} implies that the squeezed-vacuum state is the never optimal for $\eta\leq\frac{1}{\sqrt{2}}$, or if the input power $N_S$ is large enough. More precisely, the squeezed-vacuum state is optimal only for $\eta \geq \bar \eta(N_S)$, where $\bar \eta(N_S)$ is the inverse of $\bar N_S^{(0)}(\eta)$. The curve defined by $f_1=0$ can be computed numerically, and an analytical expansion can be derived using perturbation theory. For instance, a perturbation expansion to the first order gives us $\bar \eta\simeq 1-\frac{1}{cN_S}$, with $c\simeq8.86$, for $N_S\gg1$, and $\bar\eta\simeq \frac{1}{\sqrt{2}}\left(1+\frac{\sqrt{N_S}}{2}\right)$ for $N_S\ll1$, see Appendix~\hyperref[app:A3]{A3}.

\begin{figure}[t!]
    \centering
    \includegraphics[width=0.97\linewidth]{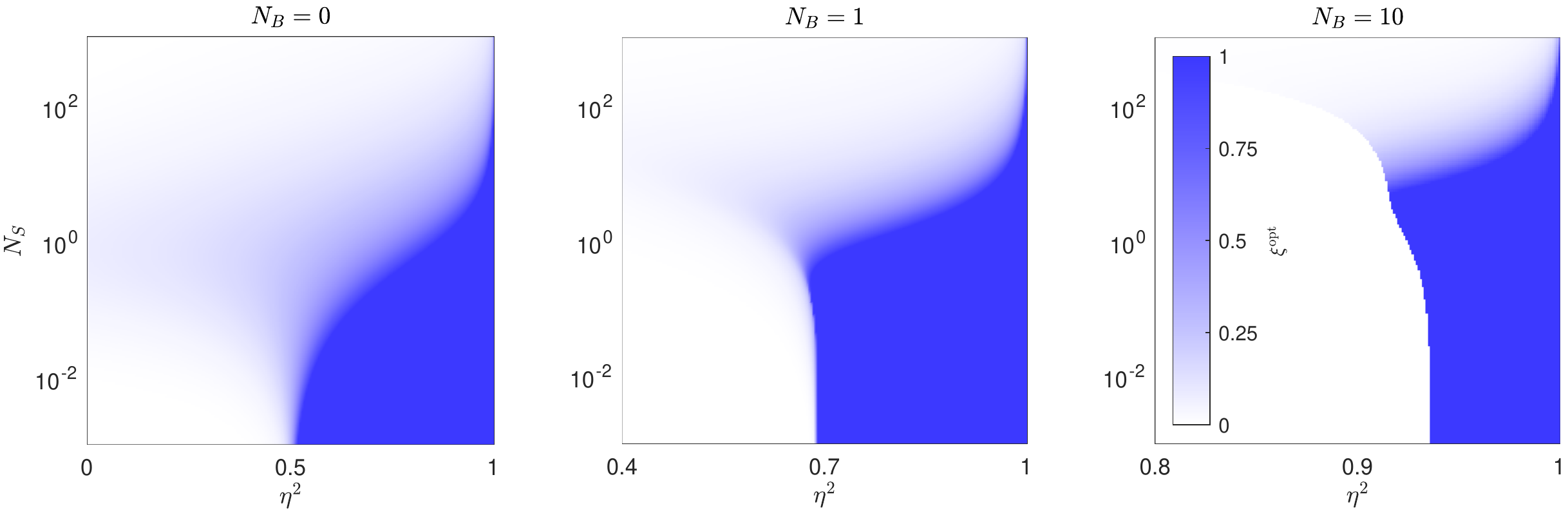}
      \caption{The optimal ratio of squeezed photons as a function of number of signal power ($N_S$) and lossy transmission ($\eta$), for three cases of background noise. At low power there is a sharp transition from the coherent state ($\xi^{\rm opt}=0$) to the squeezed-vacuum being optimal ($\xi^{\rm opt}=1$). In the moderate power regime, there is a region where a non-trivial displaced squeezed state is optimal. In the high power regime, we have that an infinitesimal squeezing is necessary for ensuring optimality ($\xi^{\rm opt}\to0$).}
    \label{fig:xi_surface}
\end{figure}

\begin{figure}[t!]
    \centering
    \includegraphics[width=\linewidth]{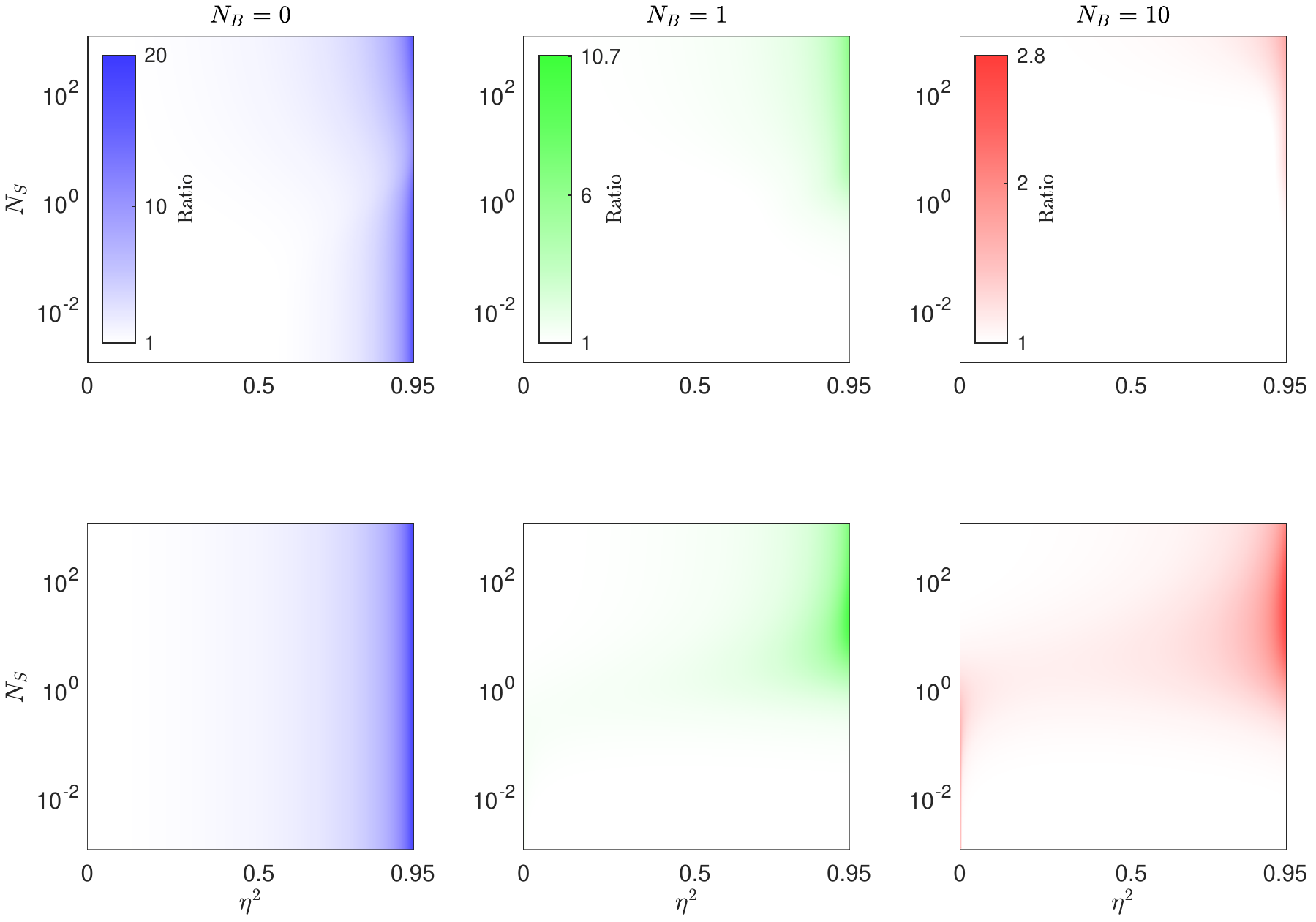}
    \caption{Quantum advantage as the ratio between optimized QFI and coherent state QFI. Colour scaling is shared within each column. \textbf{Top}: Optimal idler-free case. \textbf{Bottom}: Optimal entanglement-assisted case, achieved by a TMSV state.}
    \label{fig4}
\end{figure}

Understanding whether coherent states performs optimally in certain regimes is important, as these states are a close representation of a classical signal. Due to this property, many sensing protocols are compared with respect to coherent states in order to claim a quantum advantage, see Refs.~\cite{Tan2008, Pirandola2011} among others. The following is a no-go result for the coherent state as optimal probe.

\begin{prop}{\bf [Coherent state as optimal probe ($N_B=0$)]}\label{propCOH}
The coherent state ($\xi=0$) cannot be the optimal probe for any $\eta>0$.
\end{prop}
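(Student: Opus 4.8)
The plan is to exploit the concavity of $I_\eta^{\rm IF,(0)}$ in $\xi$ established in the proof of Proposition~\ref{propNS} (see Appendix~\hyperref[app:A2]{A2}): since the QFI is concave on $[0,1]$ for $\eta\neq0$, the left endpoint $\xi=0$ can be the global maximizer only if the one-sided derivative there is non-positive. I will show the opposite, namely that this right-derivative is strictly positive — in fact $+\infty$ — for every $\eta>0$, which forces the optimum to lie at some $\xi>0$ and hence rules out the coherent state.

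The decisive feature is that $I_\eta^{\rm IF,(0)}$ is \emph{not} analytic at $\xi=0$: the squeezing contribution enters through $\sqrt{\xi N_S(1+\xi N_S)}$, which is linear in $\sqrt{\xi}$ but has a divergent $\xi$-derivative at the origin. The clean way to make this precise is to set $s=\sqrt{\xi}$ and expand Eq.~\eqref{QFI_NB0} in powers of $s$. Writing $g(\xi)=\sqrt{\xi N_S(1+\xi N_S)}-\xi N_S=\sqrt{N_S}\,s+O(s^2)$, the first term of Eq.~\eqref{QFI_NB0} becomes
\begin{equation}
\frac{1-\xi}{1-2\eta^2 g}=1+2\eta^2\sqrt{N_S}\,s+O(s^2),
\end{equation}
while the second, purely squeezing, term is manifestly $O(s^2)$. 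Collecting these yields
\begin{equation}
I_\eta^{\rm IF,(0)}(\xi)=4N_S+8\eta^2 N_S^{3/2}\sqrt{\xi}+O(\xi).
\end{equation}

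Since the coefficient $8\eta^2 N_S^{3/2}$ of the $\sqrt{\xi}$ term is strictly positive for all $\eta>0$ and $N_S>0$, there is always a small $\xi>0$ with $I_\eta^{\rm IF,(0)}(\xi)>I_\eta^{\rm IF,(0)}(0)$; equivalently the right-derivative at $\xi=0$ diverges to $+\infty$. By concavity the coherent-state point cannot be a maximizer, which proves the claim. The only genuine subtlety — and the step that deserves care — is precisely this non-smoothness at $\xi=0$: one must not evaluate $\partial_\xi I_\eta^{\rm IF,(0)}$ naively at the origin, but instead track the leading half-integer power $\sqrt{\xi}$, which dominates any $O(\xi)$ correction. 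This also accounts for the behaviour seen in Fig.~\ref{fig:xi_surface}, where $\xi^{\rm opt}\to 0$ in the high-power limit yet never reaches zero exactly: an infinitesimal amount of squeezing always strictly improves on pure displacement.
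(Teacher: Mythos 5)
Your proof is correct and follows essentially the same route as the paper: both arguments rest on the concavity of $I_\eta^{\rm IF,(0)}$ in $\xi$ together with the divergence of the right-derivative at $\xi=0$, which the paper records as $\frac{1}{4N_S}\left(\partial_\xi I_\eta^{\rm IF,(0)}\right)=\frac{\sqrt{N_S}\,\eta^2}{\sqrt{\xi}}+\mathcal{O}(1)$ and which your expansion $I_\eta^{\rm IF,(0)}=4N_S+8\eta^2N_S^{3/2}\sqrt{\xi}+\mathcal{O}(\xi)$ reproduces exactly. Your explicit substitution $s=\sqrt{\xi}$ and the remark that one must not evaluate $\partial_\xi$ naively at the origin make the non-analyticity more transparent than the paper's phrasing, but the underlying argument is identical.
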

\begin{proof}
Due to the concavity of $I_\eta^{\rm IF, (0)}$ for $\eta\not=0$, the coherent state is optimal if and only if $\frac{1}{4N_S}\left(\partial_\xi I_\eta^{\rm IF, (0)}\right)_{|\,\xi=0}\leq0$. However, we have that $\frac{1}{4N_S}\left(\partial_\xi I_\eta^{\rm IF, (0)}\right)_{|\,\xi=0}=\frac{\sqrt{N_S}\eta^2}{\sqrt{\xi}}+\mathcal{O}(1)$ for $\xi\to0$, which is positive for any $\eta>0$.
\end{proof}
Let us indeed investigate the $\eta\to0$ limit, and show that there are regimes where coherent state is not optimal even for in this regime. We have that 
\begin{equation}\label{eq:g1}
    I_\eta^{\rm IF, (0)}=
 4N_S \left\{1+g_1(\xi,N_S)\eta^2\right\}+\mathcal{O}(\eta^4), \quad \mathrm{as}\  \eta\to0,
\end{equation} 
where $g_1(\xi,N_S)=2(1-\xi)\sqrt{\xi N_S(1+\xi N_S)}-\xi  (1+2N_S)$. In the regimes $N_S\ll1$ and $N_S\gg1$, the function $g_1(\xi,N_S)$ is always negative and decreasing with respect to $\xi$. This implies that the coherent state ($\xi=0$) is optimal in these limits. However, for intermediate values of $N_S$,  the function $g_1(\xi,N_S)$ is positive for some finite $\xi$, meaning that the QFI is maximized for a displaced squeezed state. This behaviour of the QFI is clearly visible in Fig.~\ref{fig:xi_surface}.

We now move the discussion to the regimes where non-trivial displaced squeezed states optimize the QFI. In particular, we are interested in the high- and low-power regimes, where some interesting properties emerge. In the large power regime, we have that
\begin{equation}\label{coheNSlarge}
    I_\eta^{\rm IF, (0)}= 4N_S\frac{(1-\xi)}{1-\eta^2}+\mathcal{O}(1), \quad \mathrm{as}\ \xi N_S \to \infty.
\end{equation}
In this limit the optimal squeezing is infinitesimal, {\it i.e.}, $\xi^{\rm opt}\to0$. However, $\xi^{\rm opt}$ cannot be exactly zero, otherwise the $(1-\eta^2)^{-1}$-scaling of the QFI disappears, as one can see using Eq.~\eqref{QFI_NB0}. By expanding Eq.~\eqref{coheNSlarge} to the next order in  $\xi N_S$, we derive the asymptotic value $\xi^{\rm opt}\sim \eta/[4N_S(1-\eta^2)]^{1/2}$, see Appendix~\hyperref[app:A4]{A4}. This asymptotic expansion holds for $N_S\gg \eta^2/(1-\eta^2)$~\cite{Footnote1}. %\footnote{By expanding $\xi^{\rm opt}$ around $\eta=1$, we obtain $\xi^{\rm opt}\sim [8N_S(1-\eta)]^{-1/2}$ as in Ref.~\cite{Monras2007}.}. 
Interestingly, this means that in the $N_S\gg1$ regime, an infinitesimal amount of squeezing ensures the optimality of the QFI. Notice also that Eq.~\eqref{coheNSlarge} virtually saturates the bound in Lemma~\ref{bound}. Therefore, the single-mode state is asymptotically an optimum among generic multi-mode states.

In the low-power regime, we have that
\begin{equation}\label{smallN_S}
    I_\eta^{\rm IF, (0)}= 4N_S\left\{(1-\xi)+\frac{\xi\left[(1-\eta^2)^2+\eta^4\right]}{(1-\eta^2)}\right\}+\mathcal{O}(N_S^{3/2}), \quad \mathrm{as}\  N_S\to0.
\end{equation}
This is a linear quantity in $\xi$, meaning that in this limit there is an abrupt change in the optimal $\xi$: $\xi^{\rm opt}=0$ for $\eta<1/\sqrt{2}$, and $\xi^{\rm opt}=1$ otherwise. We will see that this transition is even more evident in the finite temperature case, {\it i.e.}, for $N_B>0$.
Finally, we have that in the intermediate power regime a finite squeezing is {\it always} a resource in the quantum estimation task, even for small $\eta$, as shown in Eq.~\eqref{eq:g1}. This happens especially in the $10^{-1}\lesssim N_S\lesssim 10$ regime. However, Fig.~\ref{fig4} tells us that the advantage is minimal for $\eta\lesssim 1/\sqrt{2}$, and it becomes increasingly relevant only for $\eta$ approaching one.

\subsection{Finite temperature case: $N_B>0$}
\subsubsection{Shadow effect}
In the finite temperature case we have a peculiar feature, which consists in the vacuum having metrological power: 
\begin{align}\label{shadow}
    I_\eta(N_S=0)=\frac{4\eta^2N_B}{\left(1-\eta^2\right)\left[1+N_B\left(1-\eta^2\right)\right]}\equiv I_{\rm shad}.
\end{align}
 This is an effect appearing for $\eta,N_B>0$, consisting of a sort of shadow that the system generates in a non-vacuum environment. It is present for a generic multi-mode state. This feature, indeed, could not be observed in Refs.~\cite{Monras2007, Monras2010}, where the analysis is limited to the  vacuum environment. We call this ``shadow-effect"~\cite{TanPHD}, and denote its contribution to the QFI as $I_{\rm shad}$, as in Eq,~\eqref{shadow}. 

\subsubsection{Coherent and squeezed-vacuum probes}

In order to gain an intuition on the optimal probe, let us first discuss the QFI of two topical states: coherent and squeezed-vacuum states. For a coherent state as input, \textit{i.e.}, for $\xi=0$, the QFI can be written in a closed form as
\begin{equation}\label{coherent}
   I_\eta^{\rm IF}(\xi=0)   =I_{\rm shad} + \frac{4N_S}{1+2N_B\left(1-\eta^2\right)}\equiv I_\eta^{\rm coh}.
\end{equation}
Notice that coherent states asymptotically achieve the Schr\"odinger's precision limit for any parameter values.
%{\color{red}Finally, we notice that the coherent state trivially saturates the QFI in the $\eta\to0^+$ limit, as  $\lim_{\eta\rightarrow 0^+} \frac{I_\eta}{I_\eta^{\rm coh}} = 1$.}
For a squeezed-vacuum state probe, \textit{i.e.}, for $\xi=1$, we have a lengthy expression for the QFI, that we denote as $I_\eta^{\rm IF}(\xi=1)\equiv I_\eta^{\rm sq}$, see Appendix~\hyperref[app:A1]{A1}. In Fig.~\ref{fig:xi_surface}, we see the presence of a clear region where $\xi^{\rm opt}=1$. This feature is similar to what proved in Prop.~\ref{propNS} in the zero temperature case. In the large squeezing regime, the QFI saturates to a $\eta$-dependent value:
\begin{align}\label{asymp:sq0}
    I_{\eta}^{\rm sq}= \frac{2(1-\eta^2)^2+2\eta^4}{\eta^2(1-\eta^2)^2}+\mathcal{O}(N_S^{-1}), \quad\mathrm{as}\  N_S\to \infty.
\end{align}
This limit holds for $N_S\gg (1+N_B)/[\eta^{2}(1-\eta^2)]$. Let us investigate $I_{\eta}^{\rm sq}$ at the diverging points of Eq.~\eqref{asymp:sq0}. The analysis of the different regimes is complicated by the fact that the order of different limits do not commute. However, one can rely on Taylor analysis to understand which limit order corresponds to which regime of parameters, see Appendix~\hyperref[app:A6]{A6} for a discussion on this. 
The limits $\eta\to0$ and $N_S\to\infty$ do not commute, as  $I_{\eta}^{\rm sq}=\mathcal{O}(\eta^2)$ while $\eta=0$ is a diverging point of Eq.~\eqref{asymp:sq0}. This is due to the fact that Eq.~\eqref{asymp:sq0} holds for $N_S\eta^2\gg 1$, while $I_{\eta}^{\rm sq}=\mathcal{O}(\eta^2)$ holds for $N_S\eta^2\ll 1$.
%As done previouslyThis is due to the non-commutativity of the limits $\eta\to0$ and $N_S\to \infty$, as $I^{\rm sq}_\eta=\mathcal{O}(\eta^2)$ holds only in the $N_S\ll \mathcal{O}(\eta^{-2})$ regime.
Instead, at $\eta=1$ we have that 
\begin{align}\label{asymp:sq}
    I_{\eta}^{\rm sq}= \frac{2N_S(1+2N_B)+2N_B}{1-\eta}+\mathcal{O}(1), \quad \mathrm{as}\ \eta\to1,
\end{align}
At first glance, this may seem in contrast with Eq.~\eqref{asymp:sq0}, as Eq.~\eqref{asymp:sq} is unbounded with respect to $N_S$. Indeed, as a Taylor analysis  reveals, Eq.~\eqref{asymp:sq0} is valid for $N_S(1-\eta)\gg 1$ while Eq.~\eqref{asymp:sq} holds for  $N_S(1-\eta)\ll 1$. This means that squeezed-vacuum states {\it do not} asymptotically reach the Schr\"odinger's precision limit, as their QFI saturates for large enough $N_S$ for any fixed value of $\eta<1$, \textit{i.e.}, $I_{\eta}^{\rm sq}/N_S\to 0$ for $N_S\to \infty$.  

\subsubsection{Optimal probe}
Let us now consider the general case of a displaced squeezed state probe. In Fig.~\ref{fig:xi_surface}, we see that squeezing can be resource even when $\eta$ is far from being one. In the low-power regime there is an abrupt transition from $\xi^{\rm opt}=0$ to $\xi^{\rm opt}=1$ at a certain value of $\eta$. This can be seen more clearly by expanding $I_\eta$ for small $N_S$:
\begin{align}
    I_\eta^{\rm IF}=I_{\rm shad}+4N_S\left\{\frac{1-\xi}{1+2N_B(1-\eta^2)}+\xi g_2(\eta,N_B)\right\}+\mathcal{O}(N_S^{3/2}),\quad\mathrm{as}\  N_S\to0
\end{align}
where $g_2(\eta,N_B)$ is given in Appendix~\hyperref[app:A5]{A5}. It is clear that $\xi^{\rm opt}=1$ if $g_2(\eta,N_B)>\frac{1}{1+2N_B(1-\eta^2)}$, otherwise $\xi^{\rm opt}=0$. For large $N_B$, we have that the abrupt change happens at $\eta\simeq 1-\frac{3}{2N_B}$, see Appendix~\hyperref[app:A5]{A5}.

In the large power regime, $\xi^{\rm opt}$ behaves similarly as in the $N_B=0$ case, as shown in Fig.~\ref{fig:xi_surface}. More precisely, we have the following result for the asymptotic QFI, which generalizes (and include) the $N_B=0$ case.
\begin{prop}{\bf [Optimal asymptotic QFI]}
The optimal QFI in the large power regime is given by
\begin{align}\label{sq:NSlarge}
    I_\eta^{\rm IF}=\frac{4N_S(1-\xi)}{(1-\eta^2)(1+2N_B)}+\mathcal{O}(1), \quad\mathrm{as}\  \xi N_S\to \infty.
\end{align}
Here, the optimal squeezing is given by $\xi^{\rm opt}\sim \eta/[4N_S(1-\eta^2)(1+2N_B)]^{1/2}$ for $ N_S\gg\eta^2/[(1-\eta^2)(1+2N_B)]$ and $N_S\gg N_B$.
\end{prop}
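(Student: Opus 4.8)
The plan is to start from the single-mode QFI formula Eq.~\eqref{QFIsingle}, splitting it into the displacement contribution $(\partial_\eta \tilde{\bf d})^\top \tilde{\bf \Sigma}^{-1}(\partial_\eta \tilde{\bf d})$ and the purely metric (squeezing) part given by its first two terms. Invoking Lemma~1 to fix $\theta=n\pi$, the signal first moments reduce to $\tilde{\bf d}_S=\eta[\sqrt{2N_{\rm coh}},0]^\top$ with $N_{\rm coh}=N_S(1-\xi)$, while from Eq.~\eqref{eq1} the signal block of the covariance matrix is diagonal, $\tilde{\bf \Sigma}_S=\mathrm{diag}(\eta^2 r/2+y,\,\eta^2 r^{-1}/2+y)$, with $y=(1-\eta^2)(N_B+1/2)$ and $r=1+2\xi N_S-2\sqrt{\xi N_S(1+\xi N_S)}$. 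Because $\tilde{\bf \Sigma}_S$ is diagonal and only the $q$-component of the displacement is nonzero, the displacement term collapses to the single entry
\[
(\partial_\eta \tilde{\bf d})^\top \tilde{\bf \Sigma}^{-1}(\partial_\eta \tilde{\bf d})=\frac{2N_S(1-\xi)}{\eta^2 r/2+y},
\]
which controls the leading $N_S$-scaling.

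To establish Eq.~\eqref{sq:NSlarge} I would take $\xi N_S\to\infty$ at fixed $\xi$, so that $r\to0$ and the displacement denominator tends to $y$, giving $2N_S(1-\xi)/y=4N_S(1-\xi)/[(1-\eta^2)(1+2N_B)]$. It then remains to check that the squeezing part stays bounded: since those two terms depend only on $\tilde{\bf \Sigma}$, they coincide with the QFI of a squeezed-vacuum probe carrying $\xi N_S$ photons, which by Eq.~\eqref{asymp:sq0} saturates to an $\eta$-dependent $\mathcal{O}(1)$ constant. Hence it only feeds into the $\mathcal{O}(1)$ remainder, reproducing Eq.~\eqref{sq:NSlarge} and, as in the $N_B=0$ case of Eq.~\eqref{coheNSlarge}, forcing $\xi^{\rm opt}\to0$ while still requiring $\xi N_S\to\infty$.

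For the scaling of $\xi^{\rm opt}$ I would retain the first subleading correction in the displacement term. Using $r\simeq 1/(4\xi N_S)$ for large $\xi N_S$, so that $\eta^2 r/2\simeq \eta^2/(8\xi N_S)$, expanding the denominator gives
\[
\frac{2N_S(1-\xi)}{y+\eta^2/(8\xi N_S)}=\frac{2N_S}{y}-\frac{2N_S}{y}\,\xi-\frac{\eta^2}{4y^2}\,\frac{1}{\xi}+\ldots,
\]
where I drop cross terms of relative order $\xi$ or $1/(\xi N_S)$. The competition is now between the linear penalty $-\tfrac{2N_S}{y}\xi$ (favoring less squeezing) and the reciprocal penalty $-\tfrac{\eta^2}{4y^2}\tfrac{1}{\xi}$ (favoring more squeezing, to push the denominator down to $y$). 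Setting the $\xi$-derivative to zero yields $\xi^2=\eta^2/(8N_S y)$, that is $\xi^{\rm opt}=\eta/\sqrt{8N_S y}=\eta/[4N_S(1-\eta^2)(1+2N_B)]^{1/2}$, as claimed. The stated hypotheses are then self-consistent: $N_S\gg\eta^2/[(1-\eta^2)(1+2N_B)]$ guarantees $\xi^{\rm opt}\ll1$, and $\xi^{\rm opt}N_S\propto\sqrt{N_S}\to\infty$ keeps us in the large-squeezing regime.

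The main obstacle is controlling the squeezing part during the optimization, not merely at leading order. Its $\xi$-derivative, generated by the $\mathcal{O}((\xi N_S)^{-1})$ correction to Eq.~\eqref{asymp:sq0}, competes in principle with the $1/\xi$ penalty from the displacement term; under the stated conditions (in particular $N_S\gg N_B$, which prevents the $N_B$-dependence of $y^2$ from dominating) one must verify that this derivative is subdominant at $\xi=\xi^{\rm opt}$ and therefore does not shift the balance fixing $\xi^{\rm opt}$. Because the full $I_\eta^{\rm sq}$ is lengthy, I expect this to be the one genuinely delicate check, most cleanly handled by a symbolic Taylor expansion of Eq.~\eqref{QFIsingle} in the two small quantities $\xi$ and $1/(\xi N_S)$, tracking the orders of all retained and discarded terms.
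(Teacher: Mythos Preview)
Your proposal is correct and follows essentially the same route as the paper. The paper simply states the Taylor expansion
\[
I_\eta^{\rm IF}=\frac{4N_S(1-\xi)}{(1-\eta^2)(1+2N_B)}\left[1-\frac{\eta^2}{4N_S\xi(1-\eta^2)(1+2N_B)}\right]+\frac{2(1-2\eta^2+2\eta^4)}{\eta^2(1-\eta^2)^2}+{\scriptstyle \mathcal{O}}(1)
\]
(obtained by symbolic computation, as you anticipate at the end), and then differentiates in $\xi$; your decomposition into the displacement piece $2N_S(1-\xi)/(\eta^2 r/2+y)$ and the squeezing piece identified with Eq.~\eqref{asymp:sq0} is just a transparent hand derivation of the same two terms, and your stationarity condition $\xi^2=\eta^2/(8N_Sy)$ reproduces theirs exactly since $2y=(1-\eta^2)(1+2N_B)$.
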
 
\begin{proof}
The Taylor expansion for large $\xi N_S$ is
\begin{align}
    I_\eta^{\rm IF}=\frac{4N_S(1-\xi)}{(1-\eta^2)(1+2N_B)}\left[1-\frac{\eta^2}{4N_S\xi(1-\eta^2)(1+2N_B)}\right]+\frac{2(1-2\eta^2+2\eta^4)}{\eta^2(1-\eta^2)^2}+{ \scriptstyle \mathcal{O}}(1),
\end{align}
which holds for $\xi N_S\gg\eta^2/[(1-\eta^2)(1+2N_B)]$ and $N_S\gg N_B$. By setting the derivative with respect to $\xi$ to zero and solving for $\xi$, we obtain $\xi^{\rm opt}\sim \eta/[4N_S(1-\eta^2)(1+2N_B)]^{1/2}$.
\end{proof}

\subsection{Homodyne detection}

To realize the full benefits in using an optimized probe, the receiver must be optimized accordingly, in order for  the classical Fisher information to saturate the QFI. For Gaussian probes, the optimal receiver includes up to quadratic terms. Generally, this can be implemented by a linear circuit and photon counting. It is of experimental interest to understand what performance a simple detection scheme, such as homodyne, can achieve. Let us compute the classical Fisher information for homodyne detection on the probe optimizing the QFI. If $Q_x$ is a Gaussian random variable parametrized by a scalar unknown $x$, \textit{i.e.}, $Q_x\sim\mathcal{N}\left(m(x),V(x)\right)$, then the Fisher information of $x$ due to $Q_x$ is $H_x=\left(\partial_xm\right)^2V^{-1}+2^{-1}\left(\partial_xV\right)^2V^{-2}$.  We have that, for the probe state with $\mathbf{d}=\left(\sqrt{2N_{\rm coh}},0\right)^T$ and ${\bf\Sigma}_S=2^{-1}{\rm diag}\left(r,r^{-1}\right)$ passed through the channel and measured by homodyne detection along the in-phase quadrature, the Fisher information is
\begin{equation}
    H_\eta = \frac{2\eta^2\left(1+2N_B-r\right)^2}{\left(\eta^2r+ \left(1-\eta^2\right)\left(1+2N_B\right) \right)^2} + \frac{4N_{\rm coh}}{\eta^2r+\left(1-\eta^2\right)\left(1+2N_B\right)}.
\end{equation}
Clearly, homodyne detection is ideal for $\eta^2\ll 1$, since $\frac{H_\eta}{I_\eta^{\rm IF}}\simeq1$. Similarly, homodyne detection does well for strong signals with finite displacement, as for $N_S\gg \left(N_B+\frac{1}{2}\right)[\eta^2\left(1-\xi\right)]^{-1}$, we find $\frac{H_\eta}{I_\eta^{\rm IF}}\simeq1$. Furthermore, the loss due to homodyne detection is only a factor of two in the noisy regime, with $\frac{H_\eta}{I_\eta^{\rm IF}}\simeq\frac{1}{2}$ for $N_B\gg N_S[\eta^2(1-\eta^2)]^{-1}$.  Otherwise, homodyne detection is generally non-ideal. In particular, $H_\eta$ does not realize the $\left(1-\eta^2\right)^{-1}$-scaling, as $\lim_{\eta\rightarrow1}\frac{H_\eta}{I_\eta^{\rm IF}}=0$. In this regime for $\eta$, photon counting is needed to achieve the optimal precision.

\section{Entanglement-assisted strategy}\label{sec:IV}

In this section, we analyse the benefits of having access to an ancilla system, including entanglement. We aim to find the two-mode state that optimizes the QFI. This turns to be a highly parametrized problem, as a Gaussian system has $14$ parameters that can be varied. Here, the method used in Ref.~\cite{Nair2020} to find an ultimate bound on the QFI does not work, as the authors rely strongly on the noise normalization $N_B\rightarrow N_B/(1-\eta^2)$. Indeed, with this normalization, the channel can be represented as a composition of a lossy channel and a $\eta$-independent amplifier channel. This allows to reduce the problem to the zero temperature case, that has been solved in Ref.~\cite{Nair2018}. Without normalization, there is not such decomposition, leaving the $N_B>0$ case unsolved. 

In the following, we first strive to lower the complexity of the problem, by finding the canonical form of the generic pure-state probe. We then optimize the pure-state probe with respect to the displacement angle in a manner similar to the single-mode probe. Finally, we impose the energy constraint to arrive at a two-dimensional optimization problem. This allows us to numerically solve  the problem, and find that TMSV states are optimal for any parameter choice. We further support this result analytically in some special regimes.

\subsection{Parametrization}

Our starting point is the following Lemma, which helps in significantly reducing  the complexity of the problem.

\begin{lemma}{\bf [Canonical form of generic pure-state probe]}\label{prop:covariance}
The covariance matrix for the generic two-mode pure input state of the entanglement-assisted protocol can be written as $\begin{bmatrix}
  {\bf \Sigma}_S & {\bf \Sigma}_{SI} \\
  {\bf \Sigma}_{SI}^\top & {\bf \Sigma}_{I}
\end{bmatrix}$, where
\begin{equation}
    {\bf \Sigma}_S ={\rm diag}\left(ar,ar^{-1}\right),\quad  {\bf \Sigma}_I = {\rm diag}\left(a,a\right), \quad {\bf \Sigma}_{SI} = \sqrt{a^2-\frac{1}{4}}\begin{bmatrix}\sqrt{r}\cos\phi & \sqrt{r}\sin\phi  \\ \sqrt{r^{-1}}\sin\phi & -\sqrt{r^{-1}}\cos\phi\end{bmatrix}. \label{eq:purecovariance}
\end{equation}
\end{lemma}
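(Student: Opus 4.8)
The plan is to combine the invariance of the QFI under channel-commuting local operations with the Bloch--Messiah (Euler) decomposition of pure Gaussian states. First I would identify which operations on the input leave $I_\eta^{\rm EA}$ unchanged and can therefore be used to bring the probe to a normal form. Two classes suffice at the level of the covariance matrix: (i) an arbitrary local symplectic $T$ on the idler, since $\mathcal{E}_\eta$ acts only on $S$ and hence commutes with it, so that pre-channel application is equivalent to a parameter-independent post-channel unitary; and (ii) a phase-space rotation $R_S(\theta)$ on the signal, because the thermal attenuator \eqref{dyn} is phase-covariant. As the QFI is invariant under parameter-independent unitaries applied to the output, we may freely replace $\mathbf{\Sigma}$ by $(R_S(\theta)\oplus T)\,\mathbf{\Sigma}\,(R_S(\theta)\oplus T)^\top$. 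Note that signal \emph{squeezing} is deliberately excluded, as it does not commute with $\mathcal{E}_\eta$; the squeezing parameter $r$ must therefore survive as a genuine degree of freedom.

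Next I would invoke purity. Every pure two-mode Gaussian state is a local-symplectic image of a two-mode squeezed vacuum, so $\mathbf{\Sigma}=(S_S\oplus S_I)\,\mathbf{\Sigma}_{\rm TMSV}\,(S_S\oplus S_I)^\top$ with $\mathbf{\Sigma}_{\rm TMSV}=\tfrac12\begin{bmatrix}\cosh(2\lambda)\,\mathbb{I}_2 & \sinh(2\lambda)\,\sigma_z\\ \sinh(2\lambda)\,\sigma_z & \cosh(2\lambda)\,\mathbb{I}_2\end{bmatrix}$ and $S_S,S_I\in Sp(2,\mathbb{R})$. Using freedom (i) I would set $S_I=\mathbb{I}_2$, which fixes $\mathbf{\Sigma}_I=a\,\mathbb{I}_2$ with $a=\tfrac12\cosh(2\lambda)\ge\tfrac12$ and, by the Schmidt property of pure bipartite states, makes $a$ the common symplectic eigenvalue of both reduced blocks. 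Writing $S_S=R_S(\theta_1)\,Z\,R_S(\theta_2)$ in Euler form with $Z=\mathrm{diag}(\sqrt r,1/\sqrt r)$, freedom (ii) removes the left rotation $R_S(\theta_1)$, aligning the squeezing with the quadrature axes so that $\mathbf{\Sigma}_S=a\,\mathrm{diag}(r,r^{-1})$.

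The crucial step is to dispose of the remaining right rotation $R_S(\theta_2)$. Here I would use the invariance of $\mathbf{\Sigma}_{\rm TMSV}$ under the joint rotation $R_S(\beta)\oplus R_I(-\beta)$, which follows from $\sigma_z R(\beta)=R(-\beta)\sigma_z$; inserting $R_S(-\theta_2)\oplus R_I(\theta_2)$ this transfers $R_S(\theta_2)$ entirely onto the idler, giving $\mathbf{\Sigma}=(Z\oplus R_I(\phi))\,\mathbf{\Sigma}_{\rm TMSV}\,(Z\oplus R_I(\phi))^\top$ with $\phi=\theta_2$. A direct evaluation of the cross-block then yields $\mathbf{\Sigma}_{SI}=\tfrac12\sinh(2\lambda)\,Z\sigma_z R_I(\phi)^\top$, and the purity identity $\tfrac12\sinh(2\lambda)=\sqrt{\tfrac14\cosh^2(2\lambda)-\tfrac14}=\sqrt{a^2-\tfrac14}$ fixes its magnitude, producing exactly the matrix in \eqref{eq:purecovariance}. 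The main obstacle I anticipate is precisely this bookkeeping: one must verify that the residual signal rotation can be moved onto the idler without reintroducing any signal-side parameter, and that the Euler and Williamson reductions are mutually compatible; once this is established, the purity constraint supplies the $\sqrt{a^2-1/4}$ prefactor with no further work. It is worth remarking that $\phi$, being generated by a QFI-preserving idler rotation, cannot affect $I_\eta^{\rm EA}$ and could be set to zero, but is retained to exhibit the full canonical family.
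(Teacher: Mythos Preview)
Your argument is correct, and the computation of $\mathbf{\Sigma}_{SI}=Z\sigma_z R(\phi)^\top\sqrt{a^2-1/4}$ reproduces the stated form exactly. The route, however, differs from the paper's. The paper does \emph{not} invoke Bloch--Messiah; it first brings an arbitrary (possibly mixed) two-mode state to a canonical form with $\mathbf{\Sigma}_S=a\,\mathrm{diag}(r,r^{-1})$, $\mathbf{\Sigma}_I=b\,\mathbb{I}_2$, $\mathbf{\Sigma}_{SI}=R(\phi)\,\mathrm{diag}(c_+,c_-)$ by the same channel-commuting operations you list, and then imposes purity \emph{algebraically} through the Seralian $\Delta=a^2+b^2+2c_+c_-=\tfrac12$ together with $\det\mathbf{\Sigma}=\tfrac{1}{16}$. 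Solving these two constraints forces $a=b$ (a separate lemma, with a physicality check ruling out the branch $|a-b|\ge1$) and $c_+=-c_-=\sqrt{a^2-1/4}$; the parameters $r,\phi$ are then reinstated by applying $S(r)R(\phi)$ to the signal of this one-parameter family. Your approach bypasses the algebraic purity analysis entirely by starting from the known pure-state normal form and exploiting the joint-rotation symmetry $R_S(\beta)\oplus R_I(-\beta)$ of the TMSV to transfer the inner Euler angle onto the idler; this is shorter and makes the $\phi$-independence of the QFI manifest from the outset. The paper's approach, in turn, does not require knowing Bloch--Messiah and yields the mixed-state canonical form as a by-product.
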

The proof is given in Appendix~\hyperref[app:B1]{B1}. If we consider also the displacement, this reduces the QFI to a five-parameters quantity. The problem of optimizing the QFI can be further simplified to a two-dimensional problem, by setting the optimal displacement angle and the energy constraint.

\subsubsection{Displacement angle optimization}
We calculate the two-mode QFI for the probe state with covariance matrix as in Eq.~\eqref{eq:purecovariance} and displacement $\mathbf{d} = \sqrt{2N_{\rm coh}}(\cos\theta,\sin\theta,0,0)^T$. By simplification with symbolic software, we verify that the resulting QFI is  independent of the rotation by $\phi$. See Appendix~\hyperref[app:B2]{B2} for the full expression.  Moreover, we have the following Lemma on the optimal displacement angle. 
\begin{lemma}\label{prop:displacement}
The two-mode QFI is maximised for displacement along $\theta=n\pi$,  with $n\in\mathbb{N}$.
\end{lemma}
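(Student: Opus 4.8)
The plan is to follow the same strategy used for the single-mode displacement-angle Lemma. The starting observation is that, in the Gaussian QFI formula~\eqref{eq:QFIgeneral}, the displacement angle $\theta$ enters \emph{only} through the first-moment term $(\partial_\eta \tilde{\bf d})^\top \tilde{\bf\Sigma}^{-1}(\partial_\eta\tilde{\bf d})$. Indeed, neither the output covariance matrix $\tilde{\bf\Sigma}$ (and hence the SLD $\mathbf{L}_2$, which solves an equation built from $\tilde{\bf\Sigma}$ and $\partial_\eta\tilde{\bf\Sigma}$) nor $\partial_\eta\tilde{\bf\Sigma}$ depends on $\theta$, so the covariance contribution $\mathrm{Tr}\{\mathbf{L}_2\,\partial_\eta\tilde{\bf\Sigma}\}$ is a $\theta$-independent constant. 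It therefore suffices to maximise the displacement term alone.

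With $\mathbf{d}=\sqrt{2N_{\rm coh}}(\cos\theta,\sin\theta,0,0)^\top$ and Eq.~\eqref{eq:displdyn}, the relevant derivative is $\partial_\eta\tilde{\bf d}=\sqrt{2N_{\rm coh}}(\cos\theta,\sin\theta,0,0)^\top$, whose idler block vanishes. (Since $\tilde{\bf d}_I={\bf d}_I$ is $\eta$-independent, displacing the idler never contributes to the QFI, which also justifies taking zero idler displacement.) Consequently only the signal--signal $2\times2$ block $\mathbf{G}$ of $\tilde{\bf\Sigma}^{-1}$ enters, and the displacement term reduces to the quadratic form $2N_{\rm coh}\,\mathbf{u}(\theta)^\top\mathbf{G}\,\mathbf{u}(\theta)$ with $\mathbf{u}(\theta)=(\cos\theta,\sin\theta)^\top$. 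Next I would compute $\mathbf{G}$ as the inverse of the Schur complement of the idler block, $\mathbf{G}=\big[\eta^2{\bf\Sigma}_S+y\,\mathbb{I}_2-\eta^2{\bf\Sigma}_{SI}{\bf\Sigma}_I^{-1}{\bf\Sigma}_{SI}^\top\big]^{-1}$, inserting the canonical form of Lemma~\ref{prop:covariance}. The key simplifying computation is that, because ${\bf\Sigma}_I=a\,\mathbb{I}_2$, one gets ${\bf\Sigma}_{SI}{\bf\Sigma}_I^{-1}{\bf\Sigma}_{SI}^\top=a^{-1}{\bf\Sigma}_{SI}{\bf\Sigma}_{SI}^\top=a^{-1}(a^2-\tfrac14)\,\mathrm{diag}(r,r^{-1})$, which is diagonal and $\phi$-independent (consistent with the $\phi$-independence of the QFI noted above). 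The Schur complement then collapses to $\mathrm{diag}\!\big(y+\tfrac{\eta^2 r}{4a},\,y+\tfrac{\eta^2 r^{-1}}{4a}\big)$, so that $\mathbf{G}=\mathrm{diag}\!\big(\tfrac{4a}{4ay+\eta^2 r},\,\tfrac{4ar}{\eta^2+4ayr}\big)$ is diagonal.

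Because $\mathbf{G}$ is diagonal the quadratic form is $G_{11}\cos^2\theta+G_{22}\sin^2\theta$, a diagonal form on the unit circle that is extremised along a coordinate axis; its excess over the value at $\theta=n\pi$ is
\begin{equation}
I_\eta^{\rm EA}(\theta=n\pi)-I_\eta^{\rm EA}=2N_{\rm coh}(G_{11}-G_{22})\sin^2\theta=\frac{8a\,\eta^2\,(1-r^2)}{(4ay+\eta^2 r)(\eta^2+4ayr)}\,N_{\rm coh}\sin^2\theta,
\end{equation}
which is the exact two-mode analogue of the expression obtained in the single-mode Lemma. With the squeezing convention $r\in(0,1]$ (displacement along the lower-variance quadrature $q_S$), the factor $1-r^2\geq0$, so this excess is non-negative and vanishes precisely at $\theta=n\pi$, proving the claim; for $r>1$ the optimum relabels to $\theta=\tfrac{\pi}{2}+n\pi$, so the restriction $r\le1$ is a convention and not a loss of generality.

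I expect the main obstacle to be the Schur-complement evaluation of $\mathbf{G}$ — specifically, verifying that ${\bf\Sigma}_{SI}{\bf\Sigma}_{SI}^\top$ is diagonal, since this is exactly what makes $\mathbf{G}$ diagonal and removes the off-diagonal $\cos\theta\sin\theta$ cross term; everything downstream is the same sign analysis as in the single-mode proof. A minor point to check is invertibility of $\tilde{\bf\Sigma}$: for $\eta\in(0,1)$ one has $y>0$, so the output state is full rank and the genuine inverse (rather than the pseudoinverse) applies.
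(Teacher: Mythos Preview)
Your proposal is correct and follows essentially the same approach as the paper: you isolate the displacement term $(\partial_\eta\tilde{\bf d})^\top\tilde{\bf\Sigma}^{-1}(\partial_\eta\tilde{\bf d})$ as the only $\theta$-dependent piece, evaluate it explicitly, and read off that it is maximised at $\theta=n\pi$ for $r\le1$. The paper simply states the resulting expression $8N_{\rm coh}a\big(\tfrac{\cos^2\theta}{4ay+r\eta^2}+\tfrac{\sin^2\theta}{4ay+\eta^2/r}\big)$ without derivation, whereas you obtain it via the Schur complement of the idler block (correctly noting that ${\bf\Sigma}_{SI}{\bf\Sigma}_{SI}^\top=(a^2-\tfrac14)\,\mathrm{diag}(r,r^{-1})$, which is what makes $\mathbf{G}$ diagonal and $\phi$-independent); this extra step is a useful clarification but not a different route.
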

\begin{proof}
Displacement appears only in the second term of Eq.~\eqref{eq:QFIgeneral}, which is computed, for the covariance matrix probe of Eq.~\eqref{eq:purecovariance} and dynamics as in Eqs.~\eqref{eq:displdyn}--\eqref{eq1}, as
\begin{equation}
    (\partial_\eta \tilde{\mathbf{d}})^\top {\bf \tilde \Sigma}^{-1}(\partial_\eta \tilde{\mathbf{d}}) = 8N_{\rm coh}a\left(\frac{\cos^2\theta}{4ay+r\eta^2}+\frac{\sin^2\theta}{4ay+\eta^2/r}\right),\label{eq:displacement}
\end{equation}
where $y=\left(1-\eta^2\right)\left(N_B+\frac{1}{2}\right)$. If $r=1$, $\theta$ is degenerate. Otherwise, if $r< 1$, Eq.~\eqref{eq:displacement} is maximised for $\theta=n\pi$.
\end{proof}

\subsubsection{Energy constraint}
With optimal displacement along $\theta=n\pi$, the task of optimization is reduced to three parameters. Equivalently to the single-mode optimization, we restrict the total number of photons per mode as  $N_S = N_{\rm coh} +N_{\rm sq.th.}$. We  introduce the free parameter $\zeta^2\in [0,1]$ as the fraction of photons allocated to the covariance. In particular, $N_{\rm coh}=N_S\left(1-\zeta^2\right)$ and $N_{\rm sq.th.}= N_S\zeta^2$. 

The number of photons of a squeezed thermal state with covariance matrix ${\bf \Sigma}_S$ as in Eq.~\eqref{eq:purecovariance} is $N_{\rm sq.th.} = \frac{a}{2}\left(r+r^{-1}\right)  - \frac{1}{2}$. Notice that if we for the moment  fix $\zeta$, we have fixed also the photons allocated to the covariance as $N_{\rm sq.th.}=N_S\zeta^2$.
We use this to eliminate the parameter $a$, as $a=\frac{2N_S\zeta^2+1}{r+r^{-1}}$, and retain the free parameter $r$ which represents the trade-off between local squeezing and correlations.  Since the number of photons allocated to the covariance matrix depends on $\zeta$, so does also the range of possible squeezing, as $r\in [2N_S\zeta^2+1-2\sqrt{N_S\zeta^2\left(N_S\zeta^2+1\right)},1]$. In summary, the energy-constrained two-mode QFI is parametrized  on the two-dimensional  space $(\zeta,r)$.

\subsection{TMSV state as optimal probe}

\begin{figure}
    \centering
    \includegraphics[width=0.9\linewidth]{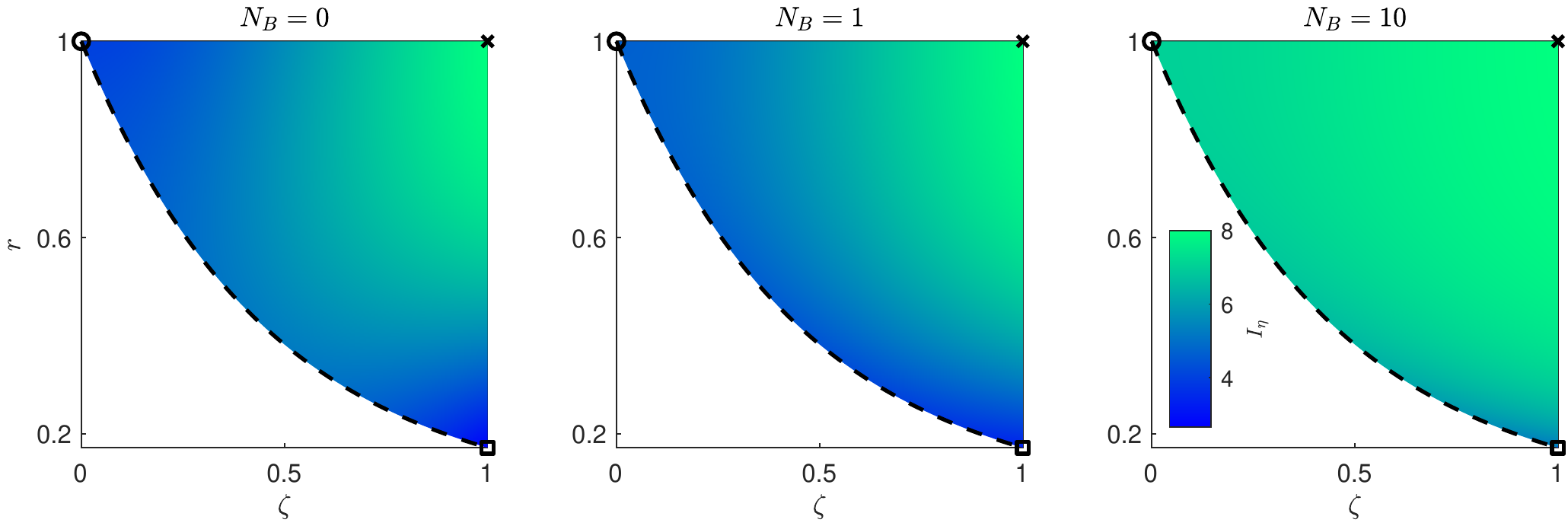}
    \caption{QFI for $\eta=\frac{1}{\sqrt{2}}$ and $N_S=1$ on the parameter space of ($\zeta,r$). The circle, square, and cross indicate coherent state, single-mode squeezed-vacuum state, and two-mode squeezed-vacuum state, respectively. The dashed line indicates the squeezed and displaced single-mode state considered in Fig.~\ref{fig:xi_surface}. For any fixed set of $\{N_S,N_B,\eta\}$, the point $(1,1)$ is maximum. }
    \label{fig:figure4}
\end{figure}

\subsubsection{Numerical results}

We have run exhaustive searches on the two-dimensional parameter space $(\zeta,r)$ to find the point maximizing the two-mode QFI. For each scenario in $\left\{N_S\in\left[10^{-3},10^3\right], N_B\in\left[10^{-3},10^3\right],\eta\in\left[10^{-3},0.999\right]\right\}$, the point $(\zeta=1,r=1)$ always results  to be the global maximum. That is, the optimal strategy always consists of allocating all photons to maximize correlations in the covariance matrix. Indeed, the state corresponding to $(1,1)$ is the TMSV. See also  Fig.~\ref{fig:figure4} for three samples of this verification with varying amounts of background noise. 

\subsubsection{Analytical results}

We support the numerical results analytically by showing that the point $(\zeta=1,r=1)$ corresponds to a local maximum of the QFI.

\begin{prop}{\bf [TMSV as local maximum of the QFI]}\label{prop:maxQFI}
On the parameter space of $(\zeta,r)$, the two-mode QFI is maximized at the point $(1,1)$.
\end{prop}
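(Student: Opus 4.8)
The plan is to exploit that $(\zeta,r)=(1,1)$ sits at a \emph{corner} of the admissible parameter region: since $\zeta\in[0,1]$ with $\zeta=1$ the upper edge, and $r\in[r_{\min}(\zeta),1]$ with $r=1$ the upper edge, the only feasible perturbations are $\delta\zeta\le 0$ and $\delta r\le 0$. Hence it suffices to verify the first-order (Karush--Kuhn--Tucker) conditions for a constrained maximum at this corner, namely that $I_\eta^{\rm EA}$ does not increase along any inward direction, and to resolve at second order the directions in which the first-order variation degenerates. Concretely, writing $\zeta=1-s$ and $r=1-t$ with $s,t\ge 0$ small, I would Taylor-expand $I_\eta^{\rm EA}(1-s,1-t)$ and show the expansion carries no positive contribution. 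All derivatives at $(1,1)$ are well defined, since the square-root non-analyticity of the photon-number parametrization only appears in the coherent limit $\zeta\to 0$, away from this fully-correlated corner.

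First I would handle the $r$-direction, which is the clean part. At $\zeta=1$ we have $N_{\rm coh}=N_S(1-\zeta^2)=0$, so the displacement term of Eq.~\eqref{eq:QFIgeneral} vanishes and $I_\eta^{\rm EA}$ reduces to the covariance contribution alone. Because $a=(2N_S\zeta^2+1)/(r+r^{-1})$ is invariant under $r\mapsto r^{-1}$, this substitution merely swaps the two signal quadratures in ${\bf\Sigma}_S$ and reshuffles ${\bf\Sigma}_{SI}$ by the shift $\phi\mapsto\phi+\pi/2$, which together implement an $\eta$-independent symplectic rotation $R\oplus\mathbb{I}$ acting on the \emph{output} covariance matrix (one checks directly from Eq.~\eqref{eq1} that $R\oplus\mathbb{I}$ commutes with the channel, using $R\,\mathbb{I}_2\,R^\top=\mathbb{I}_2$). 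Since such a rotation leaves the QFI invariant and the QFI is independent of $\phi$ (as noted below Eq.~\eqref{eq:purecovariance}), we obtain the exact symmetry $I_\eta^{\rm EA}(\zeta=1,r)=I_\eta^{\rm EA}(\zeta=1,r^{-1})$, whence $\partial_r I_\eta^{\rm EA}\big|_{(1,1)}=0$. It then remains to compute $\partial_r^2 I_\eta^{\rm EA}\big|_{(1,1)}$ with symbolic software and show it is strictly negative for all $N_S,N_B>0$ and $\eta\in(0,1)$, making $r=1$ a strict maximum along the edge $\zeta=1$.

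Next I would treat the $\zeta$-direction by computing $\partial_\zeta I_\eta^{\rm EA}\big|_{(1,1)}$ and proving it is non-negative: since the feasible move is $\delta\zeta\le 0$, $\partial_\zeta I_\eta^{\rm EA}\ge 0$ means transferring energy from the correlations into displacement can only lower the QFI. If this derivative is strictly positive the argument closes immediately: along any inward direction with $s>0$ the linear term $-s\,\partial_\zeta I_\eta^{\rm EA}$ is strictly negative and dominates, while along the degenerate pure-$r$ direction ($s=0$) the vanishing first derivative is overridden by the strictly negative $\partial_r^2 I_\eta^{\rm EA}$; hence every feasible perturbation decreases $I_\eta^{\rm EA}$ and $(1,1)$ is a strict local maximum. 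Should $\partial_\zeta I_\eta^{\rm EA}$ vanish at the corner for isolated parameter values, I would fall back on the full second-order Taylor form (the Hessian of the two-variable expansion) to confirm concavity there.

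I expect the main obstacle to be purely computational: the energy-constrained two-mode QFI is a bulky rational function of $(N_S,N_B,\eta,\zeta,r)$, so establishing $\partial_r^2 I_\eta^{\rm EA}<0$ and $\partial_\zeta I_\eta^{\rm EA}\ge 0$ uniformly amounts to proving positivity of the resulting polynomials in $N_S$, $N_B$ and $\eta^2$ over their full ranges. Reducing those sign conditions to manifestly positive combinations—by clearing denominators, which are positive by the physicality constraint $2y(\eta)\ge|1-\eta^2|$, and grouping terms—is the delicate step; the symmetry argument above is precisely what keeps it tractable, since it eliminates the first-order $r$-dependence outright. Together with the exhaustive numerical search already reported, this local-maximum result then underpins the claim that the TMSV state is the globally optimal probe.
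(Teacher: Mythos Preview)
Your proposal is correct and follows essentially the same route as the paper: verify $\partial_r I_\eta^{\rm EA}\big|_{(1,1)}=0$, $\partial_r^2 I_\eta^{\rm EA}\big|_{(1,1)}<0$, and $\partial_\zeta I_\eta^{\rm EA}\big|_{(1,1)}>0$, then use that $\zeta=1$ lies on the boundary. Your $r\mapsto r^{-1}$ symmetry observation at $\zeta=1$ is a nice conceptual shortcut for the first of these (the paper just states it from direct evaluation), but the remaining ``main obstacle'' you anticipate---reducing the sign conditions to manifestly positive polynomials---is exactly what the paper carries out explicitly, exhibiting the polynomials $f_{1,\mathrm{num}}$, $f_{1,\mathrm{den}}$, and $f_2$ as sums of non-negative terms.
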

\begin{proof} The proof consists of evaluating the gradients at the point of interest. Assume a non-zero signal $N_S>0$. We have that $\left( \partial_r I_\eta^{\rm EA} \right)|_{\zeta = 1,r=1} = 0 $, \textit{i.e.} $\left(1,1\right)$ is a stationary point with respect to $r$. Furthermore
\begin{align}
    \left(\partial_r^2I_\eta^{\rm EA}\right) |_{\zeta=1,r=1} = \quad & -\frac{2\left(1+2N_S\right)^2}{g\left(N_S,N_B\right)\left(1+ \left(1-\eta^2\right)g\left(N_S,N_B\right)\right)^2} \cdot  \frac{f_{1,{\rm num}}\left(N_S,N_B,\eta\right)}{f_{1,{\rm den}}\left(N_S,N_B,\eta\right)} < 0.
\end{align}
Here, 
\begin{align}
    f_{1,{\rm num}}= &\quad2\left(1-\eta^2\right)\left[N_S^3\left(1+2N_B\right)\eta^2\left(1-\eta^2\right)\left(2N_B^2+2N_B+1\right) + \left(1+\eta^2\right)^2N_B^4\left(1+2N_S\right) \right]\nonumber \\
    & + 2N_S^2\left(1-\eta^2\right) \left[ 2\left(1+\eta^2\right)^2N_B^4+2\left(2+7\eta^2-\eta^4\right)N_B^3+\left(3+16\eta^2-\eta^4\right)N_B^2 +\left(1+9\eta^2\right)N_B+2\eta^2\right] \nonumber \\
    & + 2N_S\left[2\left(2-\eta^2\right)\left(2-\left(1-\eta^2\right)^2\right)N_B^3+\left(3+7\eta^2-5\eta^4+\eta^6\right)N_B^2+\left(1+4\eta^2-\eta^4\right)N_B+\eta^2\right] \nonumber \\
    & + N_B\left[4\left(1+\eta^2-\eta^4\right)N_B^2+\left(3+3\eta^2-2\eta^4\right)N_B+\eta^2+1\right] > 0, \\
    f_{1,{\rm den}}= \quad & 2\left(1-\eta^2\right)^2\left(2N_S^2N_B^2+2N_S^2N_B+N_S^2+2N_SN_B^2+N_B^2\right)+2g(N_S,N_B)\left(1-\eta^2\right)+1 >0,
\end{align}
where $g(x,y)=x+2xy+y$. That is, the second order derivative is strictly negative. Therefore, the point $(1,1)$ is a maximum with respect to $r$ for any configuration of $\{N_S,N_B,\eta\}$. Regarding the parameter $\zeta$, we have
\begin{equation}
  \left( \partial_\zeta I_\eta^{\rm EA} \right)|_{r=1,\zeta = 1} 
= \frac{2N_Sf_2\left(N_S,N_B,\eta\right)}{\left(1-\eta^2\right) \left(1+2\left(1-\eta^2\right)g\left(N_S,N_B\right)\right)\left(1+\left(1-\eta^2\right)g\left(N_S,N_B\right)\right)^2} > 0
\end{equation}
because
\begin{align}
f_2\left(N_S,N_B,\eta\right) = \quad & 4N_S^2\left(1+2N_B\right)\eta^2\left(1-\eta^2\right)^2+8N_S\left(1-\eta^2\right)\left( N_B^2\left(1+\eta^2\right)^2+N_B\left(1+3\eta^2\right) +\eta^2 \right) \nonumber \\
 &+ 4N_B^2\left(1+\eta^2\right)\left(1-\eta^4\right)+4N_B\left(1+\eta^2\left(2-\eta^2\right)\right)+4\eta^2 > 0.
\end{align}
That is, the QFI is locally an increasing function of $\zeta$. The line of $\zeta=1$ is at the boundary of the parameter space. Therefore, the point $(1,1)$ is a maximum also with respect to $\zeta$.
\end{proof}

We strengthen Proposition~\ref{prop:maxQFI}  and show that the maximum at $(\zeta=1,r=1)$ is indeed the  global maximum in the $\eta\to0$ and $\eta\to1$ limits.
In the $\eta\to0$ case, the QFI is  monotone with respect to  $r$. This simplifies the optimization with respect to $\zeta$. Indeed, we have that
\begin{equation}
     \lim_{\eta\rightarrow 0} \left(\partial_r I_\eta^{\rm EA}\right)  =\frac{ 128g\left(N_S\zeta^2,N_B\right)N_B\left(N_B+ 1\right) \left(2N_S\zeta^2+1\right)^2r\left(1-r^4\right) }{\left[4r^2\left(1+2g\left(N_S\zeta^2,N_B\right)\right)^2-\left(1+r^2\right)^2\right]^2}\geq 0.
\end{equation}
This implies that, for $N_B>0$, $I_\eta$ is an increasing function of $r$, with $r=0$ and $r=1$ the only stationary points, where $r=0$ implies infinite squeezing. Because the gradient is strictly positive, $r=1$ is the optimal choice for any $\zeta$. We now study the gradient with respect to $\zeta$ and evaluate it along the line of $r=1$, as
\begin{equation}
   \lim_{\eta\rightarrow 0} \left(\partial_\zeta I_\eta^{\rm EA}\right)_{|{r=1}} = \frac{ 8N_S\zeta N_B\left(N_B+1\right)}{\left(1+2N_B\right)\left(1+g\left(N_S\zeta^2,N_B\right)\right)^2} \geq 0.
\end{equation}
The only stationary point is at $\zeta=0$, which is a minimum. Therefore, if  $N_B>0$, $\zeta=1$ is optimal. Furthermore, there are globally no other stationary points, so $(1,1)$ is the global maximum as $\eta=0$.

In the $\eta\rightarrow 1$ case, the asymptotic behaviour is
\begin{equation}\label{I2modeeta1}
     I_\eta^{\rm EA}  \sim \frac{2(N_S\zeta^2+N_B+2N_BN_S\zeta^2)}{1-\eta},\quad \mathrm{as}\ \eta\rightarrow 1,
\end{equation}
This expression is independent of $r$ and a growing function of $\zeta$. This implies the optimal strategy consists of allocating all photons to covariance. However, local squeezing, correlations, and any combination of the two perform equivalently. In fact, Eq.~\eqref{I2modeeta1} at $\zeta=1$ is identical to behaviour of the single-mode squeezed-vacuum, see Eq.~\eqref{asymp:sq}.

%{\color{red} Notice that the squeezed state is the optimal probe for $\eta\to1^-$, as we have that $\lim_{\eta\rightarrow 1^-}\frac{I_\eta}{I_\eta^{\rm sq}} =1$.}
%In this limit, only the squeezed photons contributes to the QFI for any value of $N_B$, and displacing would result only increasing energy without gaining in sensitivity. }

\subsection{QFI of the TMSV state}

The QFI of the TMSV can be written as 
\begin{equation}
    I_\eta^{\rm TMSV} =  
    \frac{ 4\left[N_S\left(N_S+1\right)\left(1-\eta^2\right)  + \eta^2\left(N_S+N_B+2N_SN_B\right)\right]}{\left(1-\eta^2\right)\left[1+\left(1-\eta^2\right)\left(N_S+N_B+2N_SN_B\right)\right]}.
\end{equation}
First, we notice that for $N_B=0$ the expression notably simplifies as $I_\eta^{\rm TMSV,(0)}=\frac{4N_S}{1-\eta^2}$,
which is clearly larger than any single-mode QFI as it saturates the bound in Lemma~\ref{bound}. Indeed, the TMSV state is an optimal probe for $N_B=0$ among the generic states (even non-Gaussian)~\cite{Nair2018}. However, the TMSV state does not perform asymptotically better than the optimal single-mode state for $N_B=0$. This can be seen by comparing directly with Eq.~\eqref{coheNSlarge}. 

For a generic $N_B$, we have that 
\begin{equation}\label{TMSV:NSlarge}
    I_\eta^{\rm TMSV}=\frac{4N_S}{(1-\eta^2)(1+2N_B)} + \mathcal{O}(1), \quad \mathrm{as}\ N_S\rightarrow \infty.
\end{equation}
In the large power regime, the optimal QFIs for single-mode and the TMSV perform virtually the same, as one can see by comparing Eq.~\eqref{sq:NSlarge} with Eq.~\eqref{TMSV:NSlarge}. The squeezed-vacuum state approaches the performance of the TMSV in the $\eta\to1$ limit, see Eq.~\eqref{I2modeeta1}. However, the TMSV state performs better on a larger region around $\eta=1$, as shown in Fig.~\ref{fig4}. For $\eta\to0$, the TMSV performs the same as a coherent state in the zero temperature case~\cite{Footnote2}. %\footnote{Notice that to the first order in $\eta^2$ the TMSV state performs better than an optimized displaced squeezed state, as $g_1<1$ in Eq.~\eqref{eq:g1}.}. 
However, for increasing $N_B$, the quantum advantage approaches 2 for $N_S\lesssim1$, see Fig.~\ref{fig5}. Indeed, we have that 
\begin{equation}\label{TMSV:NSlow}
    I_\eta^{\rm TMSV}=I_{\rm sh}+\frac{4N_S}{1+(1-\eta^2)N_B}+\mathcal{O}(N_S^2), \quad \mathrm{as}\ N_S\to0.
\end{equation}
In the $1\gg N_S\gg N_B\eta^2$ and $N_B\gg1$ regime, we have an advantage of a factor of $2$ with respect to a optimized single-mode probe. This is a known result in the context of quantum illumination~\cite{Sanz2017, Nair2020, DiCandia2021}. 

\begin{figure}[t!]
    \centering
    \includegraphics[width=0.4\linewidth]{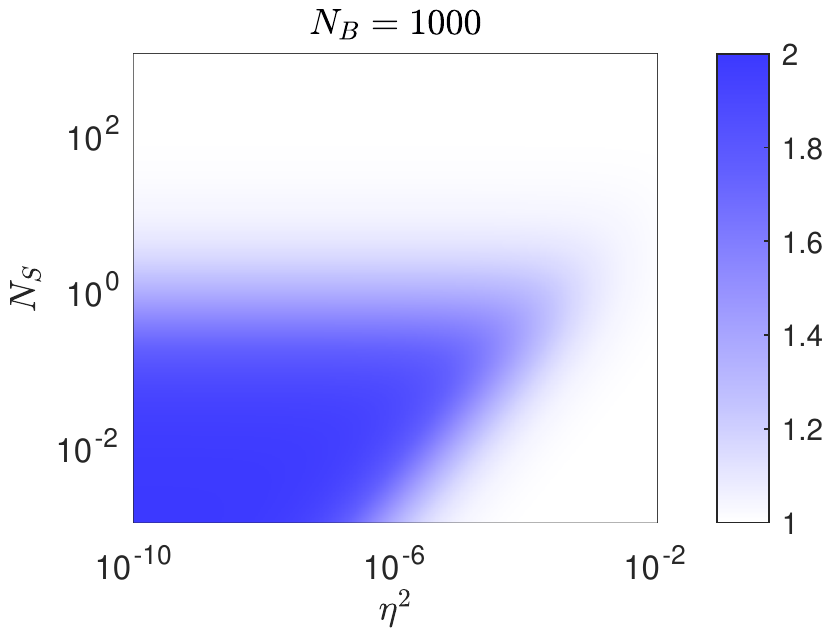}
    \caption{ Ratio between the QFIs of the TMSV and the coherent state, in the noisy ($N_B=10^3$) and lossy ($\eta\lesssim 10^{-2}$) regime.}
    \label{fig5}
\end{figure}

\section{Optimal total QFI}\label{sec:V}

Let us now discuss the case of optimizing the total QFI $\mathcal{I}_\eta=MI_\eta$ for fixed total power $\mathcal{N}_S=MN_S$. This analysis is relevant when we have a freedom of choosing how many copies of the states we will use. We shall notice that, in a continuous-variable experiment, the number $M$ can be increased by either repeating the experiment or by increasing the bandwidth. The latter, indeed, corresponds to performing several experiment in parallel.

Here, we have a clear distinction between the $N_B=0$ and the $N_B>0$ cases, due to the presence of the shadow-effect in the latter case. This is a power-independent term, that makes the total QFI optimized for $M=\infty$ if $N_B>0$. Indeed, if we have a constraint on the total power, then the larger the bandwidth the better is the achievable precision. This effect is similar to what happens in the quantum estimation of the amplifier gain, as analysed in Ref.~\cite{Nair2021}. In the amplifier case, this happens also at zero temperature, as amplification is an active operation for any temperature value.

In the following, we focus the discussion on the following aspects. We first solve the $N_B=0$ case: We show that either $M=1$ or $M=\infty$ is optimal in the idler-free case, while the choice of $M$ is irrelevant for the TMSV state. In the $N_B>0$ case, we consider an alternative model based on the environment normalization $N_B\rightarrow N_B/(1-\eta^2)$. This model has been widely used for studying remote quantum sensing scenario, such as quantum illumination and quantum reading. We show that, while without normalization a quantum advantage can be obtained only for $N_B\gg1$ and $\eta\ll1$, the normalization allows for an extension of the quantum advantage to any value of $\eta$. In this sense, we observe that the ultimate bound found by Nair and Gu~\cite{Nair2020}, can be reached for {\it any} $\eta$ by a TMSV state transmitter in the limit of infinite $M$.

In the following, similar notation as in the single probe QFI will be used. Indeed, we will refer the $N_B=0$ case with the suffix ``(0)''. Moreover, we will denote the normalized case with the suffix ``norm''. 

\subsection{The zero temperature case: $N_B=0$}
The following general bound will be useful for our discussion.
\begin{lemma}~{\bf \cite{Nair2018}}\label{bound1}
The total QFI of a generic multi-mode probe is bounded as $\mathcal{I}_\eta^{\rm (0)}\leq \frac{4\mathcal{N}_S}{1-\eta^2}$ for any $\eta\in(0,1)$ and total power $\mathcal{N}_S\geq0$.
\end{lemma}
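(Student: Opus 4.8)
The plan is to reduce the statement to the single-probe bound already recorded in Lemma~\ref{bound}, exploiting that the total QFI accumulates over the $M$ independent uses of the channel depicted in Fig.~\ref{fig:overview}. I would label the signal photon numbers of the individual probes by $N_S^{(1)},\dots,N_S^{(M)}$, so that the total power constraint reads $\sum_{k=1}^M N_S^{(k)}=\mathcal{N}_S$; I would not assume the probes to be identical, since the optimization of the total QFI explicitly allows the energy to be distributed unevenly across uses (or across the bandwidth).

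First I would invoke additivity: because each channel output is measured independently and all of them carry the same unknown $\eta$, the information extracted from the $M$ runs adds, giving $\mathcal{I}_\eta^{(0)}=\sum_{k=1}^M I_\eta^{(0),(k)}$, where $I_\eta^{(0),(k)}$ is the QFI of the $k$-th use. Next I would bound each summand by Lemma~\ref{bound}, namely $I_\eta^{(0),(k)}\leq 4N_S^{(k)}/(1-\eta^2)$. Summing over $k$ and inserting the power constraint then gives $\mathcal{I}_\eta^{(0)}\leq \frac{4}{1-\eta^2}\sum_{k=1}^M N_S^{(k)}=\frac{4\mathcal{N}_S}{1-\eta^2}$, which is the claim; the case $M=1$ recovers Lemma~\ref{bound} itself, and the bound is manifestly independent of how the energy is partitioned.

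The step I expect to carry the real content is the additivity claim in the presence of possible entanglement across the $M$ signal modes: if the joint input correlates different channel uses, the naive sum-of-QFIs decomposition is not immediate, since the QFI is only additive for product inputs. I would resolve this by noting that Lemma~\ref{bound} is stated for a \emph{generic} multi-mode probe, with $N_S$ the total signal photon number; applying it directly to the aggregate state of all $M$ signal modes (together with their idlers) as a single multi-mode probe of total signal power $\mathcal{N}_S$ already accounts for any inter-use correlations and delivers $\mathcal{I}_\eta^{(0)}\leq 4\mathcal{N}_S/(1-\eta^2)$ in one stroke. This makes the bound robust to entanglement across uses and shows that the additive argument above is merely the simplest, uncorrelated instance of the same inequality.
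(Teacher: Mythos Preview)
Your proposal is correct. Note that the paper does not supply its own proof of Lemma~\ref{bound1}: like Lemma~\ref{bound}, it is quoted as a result from Ref.~\cite{Nair2018}. Your observation that Lemma~\ref{bound1} is an immediate consequence of Lemma~\ref{bound} is exactly right---since the paper defines $\mathcal{I}_\eta=MI_\eta$ and $\mathcal{N}_S=MN_S$ for $M$ identical copies, multiplying both sides of Lemma~\ref{bound} by $M$ already yields the claim. Your further remark that Lemma~\ref{bound} is stated for a \emph{generic multi-mode} probe, and hence applies directly to the aggregate $M$-mode signal even when the uses are correlated, is a valid and slightly more general way to arrive at the same inequality; in the paper's setup (independent identical copies, with each output measured separately as in Fig.~\ref{fig:overview}) this extra generality is not needed, but it does no harm.
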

For a coherent state probe, the number of probes $M$ is irrelevant for the performance in terms of total QFI, given that  $\mathcal{I}^{\rm (0)}_{\eta}(\xi=0)=4\mathcal{N}_S$. The situation changes when squeezing enters into the game. For instance, let us set $\xi=1$:
\begin{align}\label{sqM}
    \mathcal{I}_\eta^{\rm (0)}(\xi=1)=\frac{4\mathcal{N}_S[(1-\eta^2)^2+\eta^4]}{(1-\eta^2)[1+2\frac{\mathcal{N}_S}{M}\eta^2(1-\eta^2)]}.
\end{align}
We have that $\mathcal{I}_\eta^{\rm (0)}(\xi=1)>\mathcal{I}_\eta^{\rm (0)}(\xi=0)$ provided that $\frac{\mathcal{N}_S}{M}<\frac{2\eta^2-1}{2(1-\eta^2)^2}$. 
There are a couple of striking facts. First, if $\mathcal{N}_S\leq \bar N_S(\eta)$ (defined in Prop.~\ref{propNS}), then $M=\infty$ optimizes  the total QFI, and the squeezed-vacuum is an optimal probe. This is a direct consequence of Prop.~\ref{propNS}. Second, if $\eta>\frac{1}{\sqrt{2}}$, {\it for any total power} $\mathcal{N}_S$ we can choose a sufficiently large $M$ such that squeezed-vacuum does better than a coherent state. However, by using Eq.~\eqref{coheNSlarge}, we find that applying an infinitesimal squeezing to a largely displaced mode  virtually saturates the bound in Lemma~\ref{bound1}:
\begin{align}\label{totalQFI:NSinf}
    \mathcal{I}^{0}_\eta= \frac{4\mathcal{N}_S(1-\xi)}{1-\eta^2}+\mathcal{O}(1), \quad \mathrm{as}\ \xi\mathcal{N}_S\rightarrow \infty.
\end{align}
We now show the result for the optimal bandwidth given a certain amount of power at disposal.
\begin{prop}
The total QFI $\mathcal{I}_\eta^{\rm (0)}$ is optimized either for $M=1$ or $M=\infty$.
\end{prop}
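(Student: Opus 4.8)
The plan is to reformulate the optimization so that the number of probes enters only through the energy $N_S=\mathcal{N}_S/M$ carried by each probe. Writing $\mathcal{I}_\eta^{\rm (0)}=M\,I_\eta^{\rm IF,(0)}=\mathcal{N}_S\,I_\eta^{\rm IF,(0)}/N_S$, maximizing the total QFI at fixed total power $\mathcal{N}_S$ over $M\ge1$ and over the squeezing fraction $\xi$ becomes the problem of maximizing the \emph{per-photon} QFI $I_\eta^{\rm IF,(0)}/N_S$ over $\xi\in[0,1]$ and $N_S\in(0,\mathcal{N}_S]$; the two endpoints $N_S=\mathcal{N}_S$ and $N_S\to0$ correspond precisely to $M=1$ and $M=\infty$. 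First I would change variables from $(\xi,N_S)$ to $(\xi,s)$ with $s=\xi N_S$ the number of squeezing photons per probe, under which the constraint $M\ge1$ reads $\xi\ge s/\mathcal{N}_S$, i.e.\ $\xi\in[s/\mathcal{N}_S,1]$ for each fixed $s\in(0,\mathcal{N}_S]$.

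The key observation, read directly off Eq.~\eqref{QFI_NB0}, is that
\[ \frac{I_\eta^{\rm IF,(0)}}{4N_S}=(1-\xi)\,g_1(s)+\xi\,C\,g_2(s),\qquad C=\frac{(1-\eta^2)^2+\eta^4}{1-\eta^2}, \]
where $g_1(s)=[1-2\eta^2(\sqrt{s(1+s)}-s)]^{-1}$ and $g_2(s)=[1+2\eta^2(1-\eta^2)s]^{-1}$ depend on $N_S$ only through $s$. Hence, at fixed $s$, the per-photon QFI is \emph{affine} in $\xi$. An affine function on the interval $\xi\in[s/\mathcal{N}_S,1]$ attains its maximum at one of the two endpoints, so for every fixed $s$ the optimum is either $\xi=s/\mathcal{N}_S$ or $\xi=1$.

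These two endpoints are exactly the two claimed regimes. The endpoint $\xi=s/\mathcal{N}_S$ is equivalent to $N_S=\mathcal{N}_S$, i.e.\ $M=1$; sweeping $s\in(0,\mathcal{N}_S]$ along this edge just reproduces the single-probe optimization over $\xi\in(0,1]$, whose maximum is by definition an $M=1$ configuration. The endpoint $\xi=1$ is the squeezed-vacuum line, on which the per-photon QFI reduces to $C\,g_2(s)$, a strictly decreasing function of $s$ for $\eta\in(0,1)$ (equivalently, Eq.~\eqref{sqM} is increasing in $M$); its supremum is therefore reached as $s\to0^+$, i.e.\ $M=\infty$. Taking the larger of the two edge-optima shows that the global maximum of $\mathcal{I}_\eta^{\rm (0)}$ is attained either at $M=1$ or at $M=\infty$, which is the assertion; the coherent point $\xi=0$ is the degenerate boundary on which $M$ is irrelevant and is subsumed in the $M=1$ edge.

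I expect the only real subtlety — and the step to state carefully — to be the bookkeeping of the feasibility region after the change of variables: one must verify that $s=\xi N_S$ makes the per-photon QFI genuinely affine in $\xi$ (the nonlinearity in $s$ is harmless, since $\xi$ is what we extremize at fixed $s$) and that the two $\xi$-endpoints map to $M=1$ and to the squeezed-vacuum $M=\infty$ branch respectively, rather than to some interior $M$. Once the affine-in-$\xi$ structure is identified, no further optimization over $s$ is needed beyond the elementary monotonicity of $g_2$, so the argument sidesteps the delicate second-derivative analysis in $s$ that a direct convexity approach would otherwise require.
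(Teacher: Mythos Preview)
Your proposal is correct and follows essentially the same route as the paper: both make the change of variables $s=\xi N_S$ (the paper calls it $x$) and exploit that the per-photon QFI is affine in the remaining variable---the paper phrases this as ``linear in $N_S^{-1}$'' at fixed $x$, you as ``affine in $\xi$'' at fixed $s$, which are equivalent since $\xi=s/N_S$. Your explicit treatment of the feasibility region $\xi\in[s/\mathcal{N}_S,1]$ and the monotonicity of $g_2$ on the $\xi=1$ edge is cleaner than the paper's somewhat informal handling of the $N_S\to0$ endpoint (relegated there to a footnote), but the core idea is identical.
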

\begin{proof}
Let us denote $N_S=\mathcal{N}_S/M$, and extend, for simplicity, the optimization problem to the continuum. Indeed, we consider $N_S\in [0,\mathcal{N}_S]$. We are interested in the $N_S$ value that solves the problem
\begin{align}\label{opt1}
    \max_{N_S\in[0,\mathcal{N}_S],\xi\in[0,1]} \frac{\mathcal{I}_\eta^{\rm (0)}}{4\mathcal{N}_S}=\max_{\xi\in[0,1]}\left\{\max_{N_S\in[0,\mathcal{N}_S]}\frac{\mathcal{I}_\eta^{\rm (0)}}{4\mathcal{N}_S}\right\}.
\end{align}
After a change of variable $\xi N_S= x$, we have that
\begin{align}\label{optM}
    \max_{\xi\in[0,1]}\left\{\max_{N_S\in[0,\mathcal{N}_S]}\frac{\mathcal{I}_\eta^{\rm (0)}}{4\mathcal{N}_S}\right\}&=\max_{0\leq x \leq N_S^{\rm opt}(x)}\left\{\max_{N_S\in[0,\mathcal{N}_S]}\left[ \frac{1-\frac{x}{N_S}}{1-2\eta^2\left(\sqrt{x(1+x)}-x\right)}+\frac{\frac{x}{N_S}\left[(1-\eta^2)^2+\eta^4\right]}{(1-\eta^2)(1+2x\eta^2(1-\eta^2))}\right]\right\} \\
    \quad&\equiv \max_{0\leq x \leq N_S^{\rm opt}(x)}\left\{\max_{N_S\in[0,\mathcal{N}_S]} h_\eta(x,N_S)\right\},
\end{align}
where $N_S^{\rm opt}(x)$ is the argmax of the optimization with respect to $N_S$.
The function $h_\eta(x,N_S)$ is linear in $N_S^{-1}$, meaning that the maximum is in one of the extreme point, \textit{i.e.}, $N_S^{\rm opt}(x)$ is either $0$ or $\mathcal{N}_S$~\cite{Footnote3}. %\footnote{If $N_S^{\rm opt}(x)=0$, then one can first set $N_S^{\rm opt}(x)=\varepsilon$, and then optimize for $x\in[0,\varepsilon]$. Finally, $\varepsilon$ is sent to zero.}. 
This means that either $M=1$ or $M=\infty$ is the optimal choice.
\end{proof}
\begin{figure}
    \centering
    \includegraphics[width=0.9\linewidth]{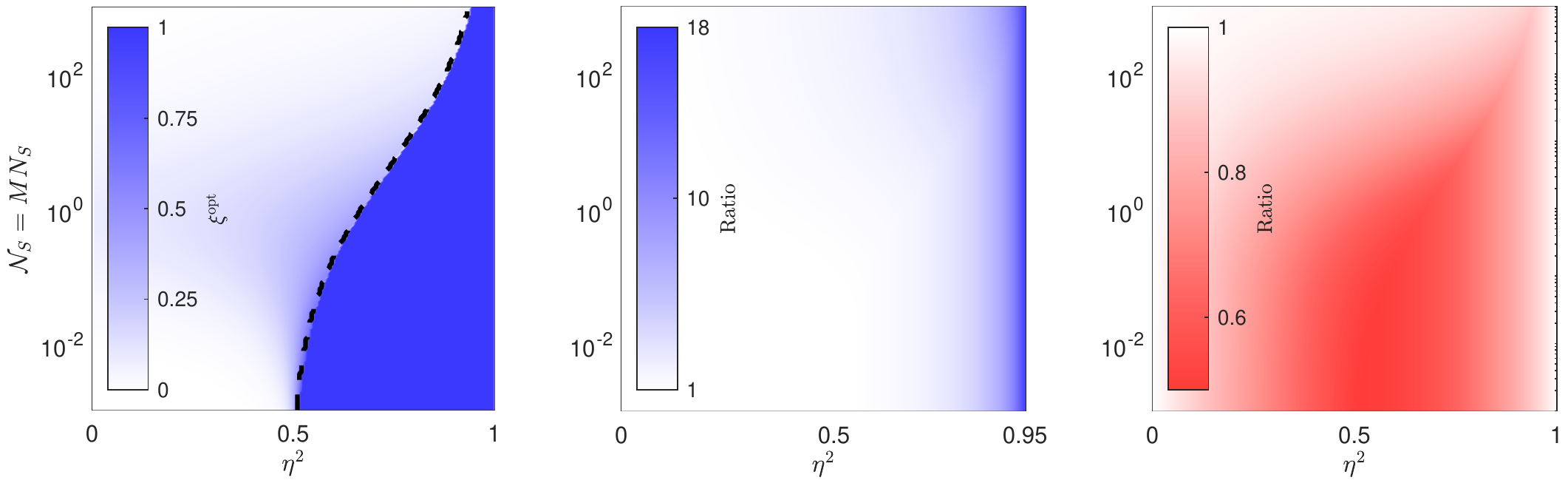}
    \caption{(\textbf{Left}) Optimal $\xi$ for the single-mode QFI, jointly optimized over the bandwidth $M$ for a total power $\mathcal{N}_S=MN_S$. The dashed line indicates the switch from $M^{\rm opt}=1$ (left) and $M^{\rm opt}=\infty$ (right). Here, $\xi^{\rm opt}=1$ on a larger region with respect to Fig.~1a. We have two clear regions corresponding to $\{M^{\rm opt}=\infty,\xi^{\rm opt}=1\}$ and  $\{M^{\rm opt}=1,\xi^{\rm opt}<1\}$. (\textbf{Middle}) Ratio of the QFI for the optimized idler-free state and the coherent state. (\textbf{Right}) Ratio of the QFI for the optimized  idler-free state and the TMSV state. }
    \label{fig3}
\end{figure}

Notice that in the limit of large total power, the total QFI is virtually optimized for any $M$. This is clear from Eq.~\eqref{totalQFI:NSinf}, which does not depend on $M$. The next question is whether squeezed-vacuum states perform better than any state for fixed total power. This turns out to depend on the total available energy, as shown the following Proposition.
\begin{prop}
There exists $\bar K(\eta)$ such that $M=1$ optimizes $\mathcal{I}_\eta^{\rm (0)}$ for any $\mathcal{N}_S> \bar K(\eta)$. We have that $\bar K(\eta)=0$ for $0<\eta\leq\frac{1}{\sqrt{2}}$ and $\bar K(\eta)\geq \bar N_S(\eta)$ for $\eta>\frac{1}{\sqrt{2}}$.
\end{prop}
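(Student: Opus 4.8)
The plan is to exploit the preceding Proposition, which already reduces the optimization to the two extremes $M=1$ and $M=\infty$; the whole statement then becomes a comparison of two explicit functions of the total power. Write $P_1(\mathcal{N}_S)=\max_{\xi}I_\eta^{\rm IF,(0)}(\xi,N_S=\mathcal{N}_S)$ for the single-probe value and $P_\infty(\mathcal{N}_S)=\lim_{M\to\infty}\max_\xi M\,I_\eta^{\rm IF,(0)}(\xi,N_S=\mathcal{N}_S/M)$ for the infinite-bandwidth value, so that ``$M=1$ optimizes'' means $P_1\geq P_\infty$. First I would pin down $P_\infty$ exactly. Since $M\to\infty$ forces $N_S=\mathcal{N}_S/M\to0$, the small-power expansion in Eq.~\eqref{smallN_S} applies and is linear in $\xi$, so $P_\infty=c_\infty(\eta)\,\mathcal{N}_S$ with $c_\infty(\eta)=4\max\!\left(1,\tfrac{(1-\eta^2)^2+\eta^4}{1-\eta^2}\right)$; the $\mathcal{O}(N_S^{3/2})$ remainder, multiplied by $M$, vanishes in the limit. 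Because the coefficient of $\xi$ in Eq.~\eqref{smallN_S} equals $\tfrac{\eta^2(2\eta^2-1)}{1-\eta^2}$, the maximizing infinite-bandwidth probe is squeezed-vacuum ($\xi=1$) for $\eta>1/\sqrt2$ and coherent ($\xi=0$) for $\eta\le1/\sqrt2$.

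Second, I would control $P_1$ at large power. By Lemma~\ref{bound1} one has $P_1\leq\tfrac{4\mathcal{N}_S}{1-\eta^2}$, and Eq.~\eqref{coheNSlarge} shows this bound is saturated as $\mathcal{N}_S\to\infty$ by a largely displaced mode carrying infinitesimal squeezing, so $P_1=\tfrac{4\mathcal{N}_S}{1-\eta^2}+o(\mathcal{N}_S)$. Using $1-[(1-\eta^2)^2+\eta^4]=2\eta^2(1-\eta^2)$, the leading slopes satisfy $\tfrac{4}{1-\eta^2}-c_\infty(\eta)=8\eta^2>0$ for $\eta>1/\sqrt2$ (and trivially $\tfrac{4}{1-\eta^2}>4=c_\infty$ for $\eta\le1/\sqrt2$), so the difference $D(\mathcal{N}_S):=P_1-P_\infty$ diverges to $+\infty$. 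Defining $\bar K(\eta):=\sup\{\mathcal{N}_S\geq0:\,D(\mathcal{N}_S)<0\}$ with the convention $\sup\emptyset=0$, continuity of $D$ together with its divergence make this supremum finite, and for every $\mathcal{N}_S>\bar K(\eta)$ one has $D\geq0$, i.e.\ $M=1$ is optimal. Note that only existence of the threshold is claimed, so no single-crossing or monotonicity of $D$ needs to be shown—the $\sup$ definition absorbs any transient sign changes below $\bar K(\eta)$.

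It then remains to verify the two stated properties. For $0<\eta\le1/\sqrt2$ we have $P_\infty=4\mathcal{N}_S$, while Proposition~\ref{propCOH} shows the concave single-probe QFI is strictly increasing at $\xi=0$, so its maximum exceeds the coherent value $I_\eta^{\rm IF,(0)}(\xi=0)=4\mathcal{N}_S$; hence $P_1>P_\infty$ for all $\mathcal{N}_S>0$, giving $\{D<0\}=\emptyset$ and $\bar K(\eta)=0$. For $\eta>1/\sqrt2$, Proposition~\ref{propNS} guarantees that squeezed-vacuum is the optimal single-probe state whenever $\mathcal{N}_S\le\bar N_S(\eta)$, so there $P_1=I_\eta^{\rm sq}$; evaluating Eq.~\eqref{sqM} at $M=1$ yields $P_1=P_\infty/[1+2\mathcal{N}_S\eta^2(1-\eta^2)]<P_\infty$, whence $(0,\bar N_S(\eta)]\subseteq\{D<0\}$ and therefore $\bar K(\eta)\ge\bar N_S(\eta)$.

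The main obstacle I anticipate is the clean treatment of the $M=\infty$ limit: justifying that the linear term of Eq.~\eqref{smallN_S} delivers the \emph{exact} value of $P_\infty$—in particular interchanging the $M\to\infty$ limit with the maximization over $\xi$, and confirming that $M\cdot\mathcal{O}(N_S^{3/2})=\mathcal{O}(\mathcal{N}_S^{3/2}/\sqrt M)\to0$. Once $P_\infty$ is secured as a linear function with slope strictly below $\tfrac{4}{1-\eta^2}$, the rest is the comparison of two explicit one-variable functions, with the divergence of $D$ supplying existence of $\bar K(\eta)$ and the two boundary analyses (via Propositions~\ref{propCOH} and~\ref{propNS}) supplying its stated values.
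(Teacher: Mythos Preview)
Your proposal is correct and follows essentially the same route as the paper: both reduce to comparing the $M=1$ and $M=\infty$ values via the preceding Proposition, use Proposition~\ref{propCOH} to dispose of the $\eta\le1/\sqrt{2}$ case (where $P_\infty$ equals the coherent-state value $4\mathcal{N}_S$, which any finite-$M$ squeezed-coherent probe strictly beats), and use Proposition~\ref{propNS} together with the monotonicity of the squeezed-vacuum expression~\eqref{sqM} to establish $\bar K(\eta)\ge\bar N_S(\eta)$ for $\eta>1/\sqrt{2}$. Your treatment is in fact slightly more explicit than the paper's---you compute $P_\infty=c_\infty(\eta)\mathcal{N}_S$ exactly and define $\bar K(\eta)$ as a supremum, whereas the paper invokes the function $h_\eta(x,N_S)$ from Eq.~\eqref{optM} and the saturation of Lemma~\ref{bound1} at $N_S=\infty$ more informally---but the logical skeleton is the same.
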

\begin{proof}
Let us consider $0<\eta\leq\frac{1}{\sqrt{2}}$. For $M=\infty$, the total QFI $\mathcal{I}_\eta^{\rm (0)}=4\mathcal{N}_S\left(1-\xi+\xi\frac{(1-\eta^2)^2+\eta^4}{1-\eta^2}\right)$ is optimized for $\xi=0$, \textit{i.e.}, for a coherent state probe. Notice that the performance of a coherent state probe is the same for any $M$, \textit{i.e.}, $\mathcal{I}_\eta^{\rm (0)}(M=\infty,\xi=0)=\mathcal{I}_\eta^{\rm (0)}(M,\xi=0)$ for any {\it finite} $M$. However, due to Proposition~\ref{propCOH}, for any finite $M$ there is a squeezed coherent state that performs better than a coherent state probe, which is an absurd. It follows that $M=\infty$ cannot optimize the total QFI. In this case, $M=1$ is optimal for any $\mathcal{N}_S>0$.

Let us now consider $\eta>\frac{1}{\sqrt{2}}$. Let us extend the optimization domain to $N_S\in[0,\infty]$. The quantity $\frac{\mathcal{I}_\eta^{\rm (0)}}{4\mathcal{N}_S}$ is maximal for $N_S=\infty$, as for this value the bound in Lemma~\ref{bound1} is saturated. This means that, in Eq.~\eqref{optM}, there exists $\bar K(\eta)$ such that $h_\eta(x,N_S)>h_\eta(x,0)$ for any $N_S\geq \bar K(\eta)$. It follows that if $\mathcal{N}_S>\bar K(\eta)$, then $M=1$ is optimal. 
In addition, we have that $\bar K(\eta)\geq N_S(\eta)$. In fact, if $\mathcal{N}_S\leq \bar N_S(\eta)$, then the optimal choice is $M=\infty$, as shown below Eq.~\eqref{sqM}. 
\end{proof}
In Fig.~\ref{fig3} we numerically show that $\tilde K(\eta)$ is {\it strictly} larger than $\tilde N_S(\eta)$. This is because when jointly optimizing the total QFI with respect to $M$ and $\xi$, the squeezed-vacuum state results to be the optimal choice on a larger range of parameter values. In this case we numerically see that $\xi^{\rm opt}=1$ if and only if $\mathcal{N}_S\leq\tilde K(\eta)$, and the optimal value is achieved in the limits $N_S\to0$ and $M\to\infty$, with the constraint $MN_S=\mathcal{N}_S$.

Regarding the TMSV case, we have that the total QFI $\mathcal{I}^{\rm TMSV}_\eta=\frac{4\mathcal{N}_S}{1-\eta^2}$, is independent on $M$. No advantage with respect an optimized single-mode transmitter can be observed in the $\mathcal{N}_S\gg1$ regime, as $\mathcal{I}^{\rm TMSV}_\eta$ approaches the optimal total QFI achieved in the idler-free case, see Eq.~\eqref{totalQFI:NSinf}. However, one shall keep in mind that reaching the performance of Eq.~\eqref{totalQFI:NSinf} needs squeezing, albeit an infinitesimal amount. Indeed, the TMSV still show an advantage with respect to a coherent state transmitter for any $\eta\not=0$. In addition, due to Lemma~\ref{bound1}, the TMSV state is indeed an optimal probe for any value of $\eta$. 
In Fig.~\ref{fig3}, it is shown a factor of $2$ advantage is reached for a large range of values of $\eta$ if $\mathcal{N}_S\lesssim 1$, and it decreases with increasing $\mathcal{N}_S$. Notice that for $\mathcal{N}_S\simeq1$ and $\eta\simeq\frac{1}{\sqrt{2}}$ we have that $\mathcal{I}_{\eta}^{\rm TMSV,(0)}\simeq 10$, which is enough to realize a sensitivity up to $\Delta \hat \eta^2\lesssim 0.1$. To achieve larger sensitivity values, the optimal displaced squeezed state shall be a better choice for an experimentalist, as it realizes similar performances as the TMSV probe.

%\textit{\color{red} Just a note: for $N_B>0$, the shadow-effect will make the total QFI diverge with $M\rightarrow \infty$ for any $\eta$ and $\mathcal{N}_S$. The bandwidth and total power question only makes sense when $N_B=0$. Also, I understand the proof above as
%\begin{equation}
%  \max_\xi  \lim_{M\rightarrow\infty}\mathcal{I}_\eta^{\rm (0)} = \max_\xi 4\mathcal{N}_S\left(1+\xi\left(\eta^2-\frac{1}{2}\right)\frac{2\eta^2}{1-\eta^2}\right) =\begin{cases}\mathcal{I}_\eta^{N_B=0,{\rm coh.}}, \quad {\rm if }\, \eta \leq \frac{1}{\sqrt{2}}\\ \mathcal{I}_\eta^{N_B=0,{\rm coh.}} +4\mathcal{N}_S\left(\eta^2-\frac{1}{2}\right)\frac{2\eta^2}{1-\eta^2}, \quad {\rm if }\, \eta > \frac{1}{\sqrt{2}} \end{cases}
%\end{equation}}

\subsection{The finite temperature case: $N_B>0$}

\subsubsection{Quantum advantage with the shadow-effect}

As previously discussed, $M=\infty$ is the optimal choice for any value of $\mathcal{N}_S$, due to the presence of the shadow-effect. Let us discuss a limit where the shadow-effect is not present, and where the TMSV is expected to show a relevant advantage with respect to the single-mode case. 
In the finite $N_B$ case, we expect to have an advantage of the TMSV state over the idler-free strategy for low ({\it albeit} finite) values of $\mathcal{N}_S$, similarly as shown in the $N_B=0$ case.
Let us focus on the $N_B\gg(1-\eta^2)^{-1}$ regime. The presence of the shadow-effect makes the quantum advantage disappears for finite values of $\eta$. Therefore, we consider $N_B\eta^2 \ll \mathcal{N}_S/M$. In this regime, we have that $\mathcal{I}^{\rm TMSV}_\eta\simeq \frac{4\mathcal{N}_S(M+\mathcal{N}_S)}{N_B(M+2\mathcal{N}_S)}$, while for the coherent state we get $\mathcal{I}_\eta^{\rm coh}= \frac{2\mathcal{N}_S}{N_B(1-\eta^2)}$. It is  clear  that $\mathcal{I}_\eta^{\rm TMSV}$ is optimized for $M\gg \mathcal{N}_S$, which also implies that $\eta^2 N_B$ must be much smaller than 1. This agrees with the analysis done after Eq.~\eqref{TMSV:NSlow}. In this regime, the TMSV state shows a quantum advantage of 2 for arbitrarily large $\mathcal{N}_S$. In Fig.~\ref{fig5}, the $M=1$ is drawn. It is visible that the quantum advantage is present for $\eta^2 N_B\ll1$, and it disappears already for $\eta^2 N_B\sim 1$.

\subsubsection{Erasing the shadow-effect: $N_B\rightarrow N_B/(1-\eta^2)$}
This normalization has been used for discussing remote sensing protocols such as quantum illumination and quantum reading. It erases the shadow-effect, and, with that, any benefit derived by its presence. In this case, the following result has been proved by Nair and Gu.
\begin{lemma}~{\bf \cite{Nair2020}}\label{bound2}
The total QFI of a generic multi-mode probe in the normalized environment case is bounded as $\mathcal{I}^{\rm norm}_\eta\leq \frac{4\mathcal{N}_S}{N_B+1-\eta^2}$ for any $\eta$ and total power $\mathcal{N}_S\geq0$.
\end{lemma}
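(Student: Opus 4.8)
The plan is to reduce the normalized finite-temperature problem to the pure-loss bound of Lemma~\ref{bound1} via the channel factorization anticipated at the start of Section~\hyperref[sec:IV]{IV}. First I would exhibit the decomposition $\mathcal{E}^{\rm norm}_\eta = \mathcal{A}_G\circ\mathcal{N}_{\eta'}$, where $\mathcal{N}_{\eta'}$ is a zero-temperature (pure-loss) channel of amplitude transmissivity $\eta'=\eta/\sqrt{N_B+1}$ acting on the signal mode and $\mathcal{A}_G$ is a quantum-limited amplifier of gain $G=N_B+1$. The crucial feature is that $G$ carries no $\eta$-dependence: all of the $\eta$-sensitivity is pushed into the loss stage $\mathcal{N}_{\eta'}$. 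I would verify the factorization at the level of first and second moments: composing the pure-loss map ${\bf \Sigma}_S\mapsto\eta'^2{\bf \Sigma}_S+\tfrac{1}{2}(1-\eta'^2)\mathbb{I}_2$ with the amplifier ${\bf \Sigma}\mapsto G{\bf \Sigma}+\tfrac{1}{2}(G-1)\mathbb{I}_2$ yields signal block $\eta^2{\bf \Sigma}_S+[N_B+\tfrac12(1-\eta^2)]\mathbb{I}_2$ and displacement scaling $\sqrt{G}\,\eta'=\eta$, which matches Eqs.~\eqref{eq:displdyn}--\eqref{eq1} with the normalized noise $y(\eta)=(1-\eta^2)[N_B/(1-\eta^2)+\tfrac12]=N_B+\tfrac12(1-\eta^2)$. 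Both stages are physical since $G=N_B+1\geq1$ and $\eta'^2=\eta^2/(N_B+1)\leq1$.

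Next I would chain three elementary estimates. Since $\mathcal{A}_G$ is $\eta$-independent and completely positive, monotonicity of the QFI under post-processing gives $\mathcal{I}^{\rm norm}_\eta\leq\mathcal{I}_\eta[\mathcal{N}_{\eta'}]$, where the right-hand side is the total QFI of the \emph{same} probe sent through the pure-loss channel; crucially the input power is unchanged, so the energy constraint $\mathcal{N}_S$ is preserved. A change of variables then converts the $\eta$-derivative into an $\eta'$-derivative: because $\eta'=\eta/\sqrt{N_B+1}$, the chain rule gives $\mathcal{I}_\eta[\mathcal{N}_{\eta'}]=(\partial_\eta\eta')^2\,\mathcal{I}_{\eta'}[\mathcal{N}_{\eta'}]=\tfrac{1}{N_B+1}\,\mathcal{I}_{\eta'}[\mathcal{N}_{\eta'}]$. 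Finally I would invoke the zero-temperature bound of Lemma~\ref{bound1} for the loss parameter $\eta'$, namely $\mathcal{I}_{\eta'}[\mathcal{N}_{\eta'}]\leq 4\mathcal{N}_S/(1-\eta'^2)$, and substitute $\eta'^2=\eta^2/(N_B+1)$. Combining,
\begin{align}
    \mathcal{I}^{\rm norm}_\eta\leq\frac{1}{N_B+1}\cdot\frac{4\mathcal{N}_S}{1-\eta^2/(N_B+1)}=\frac{4\mathcal{N}_S}{N_B+1-\eta^2},
\end{align}
which is the claimed bound.

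I expect the main obstacle to be locating the correct decomposition --- in particular recognizing that the quantum-limited amplifier gain must be exactly $G=N_B+1$ so that (i) the thermal-noise matching condition holds for all $\eta$ simultaneously (the $\eta$-dependent pieces cancel, forcing the amplifier to be noiseless), and (ii) the factor $1/(N_B+1)$ from the chain rule cancels against the $(N_B+1)$ produced by the loss bound to give the clean denominator $N_B+1-\eta^2$. The remaining care is bookkeeping: confirming that the amplifier acts only on the signal (so idler modes pass untouched and Lemma~\ref{bound1} applies verbatim to the multi-mode, entanglement-assisted probe) and that monotonicity of the QFI holds in the multi-copy / total-QFI setting, which follows since the $M$ uses are processed independently by the same $\eta$-independent map.
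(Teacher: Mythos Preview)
Your proposal is correct and is precisely the argument the paper attributes to Ref.~\cite{Nair2020}: the paper itself does not prove Lemma~\ref{bound2} but, at the opening of Section~\hyperref[sec:IV]{IV}, explicitly describes the mechanism as representing the normalized channel as a composition of a pure-loss channel with an $\eta$-independent amplifier and then reducing to the zero-temperature bound of Ref.~\cite{Nair2018} (Lemma~\ref{bound1}). Your factorization $\mathcal{A}_{G=N_B+1}\circ\mathcal{N}_{\eta'=\eta/\sqrt{N_B+1}}$, the data-processing step, the chain-rule factor $(N_B+1)^{-1}$, and the final substitution are all exactly this route, so there is nothing to add.
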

In Appendix~\hyperref[app:C]{C}, we have computed the QFI with the noise normalization, for both the single-mode and the TMSV state case. In the idler-free case, we have that 
\begin{align}\label{normQFIlargeNS}
    \mathcal{I}_\eta^{\rm norm, IF} = \frac{4\mathcal{N}_S(1-\xi)}{2N_B+1-\eta^2} +\mathcal{O}(1),  \quad \mathrm{as}\ \xi\mathcal{N}_S\rightarrow \infty
\end{align}
for any $M$. For a coherent state input, {\it i.e.}, for $\xi=0$, we get that $\mathcal{I}_\eta^{\rm norm, coh}=\frac{4\mathcal{N}_S}{1+2N_B}$, meaning that an infinitesimal amount of squeezing allows us  to reduce the QFI as in Eq.~\eqref{normQFIlargeNS}. Comparing this result with Eq.~\eqref{sq:NSlarge}, we see that the $(1-\eta^2)^{-1}$ divergence disappears. Indeed, in the $N_B\gg1$ regime, the un-squeezed coherent state is virtually the optimal probe for any value of $\eta$.

The QFI of the TMSV state can be written as 
\begin{equation}
 {\mathcal{I}}_\eta^{\rm norm, TMSV} = \frac{4\mathcal{N}_S\left[ N_B+1 + \mathcal{N}_S\left(N_B+1-\eta^2\right)M^{-1}   \right]}{\left(N_B+1-\eta^2\right)\left[ N_B+1 + \mathcal{N}_S\left(2N_B+1-\eta^2\right)M^{-1}   \right]}.
\end{equation}
In the infinite bandwidth limit we get
\begin{equation}\label{limitTMSVnorm}
{\mathcal{I}}_\eta^{\rm norm, TMSV}  = \frac{4\mathcal{N}_S}{N_B+1-\eta^2} +\mathcal{O}(M^{-1}),  \quad \mathrm{as}\ M\rightarrow \infty.
\end{equation}
Equation~\eqref{limitTMSVnorm} saturates the ultimate bound in Lemma~\ref{bound2} for any value of $\eta$. In the normalized environment case, the infinite bandwidth TMSV state is indeed the optimal probe {\it for any value of $\eta$}~\cite{Footnote4}. %\footnote{In Ref.~\cite{Nair2020} the authors claim that the bound in Eq.~(19) of their paper needs not to be saturable. Here, we show that it is actually saturated by a TMSV probe.}. 
The quantum advantage is limited to a factor of $2$ in the QFI, and is obtained in the limit of large $N_B$.

We notice that there is a clear qualitative distinction between in the normalized and the unnormalized models. In the unnormalized model, the shadow-effect washed out the quantum advantage for low-enough $N_S$. A quantum advantage is reached by the TMSV state only when the bandwidth of the classical probe is limited, and for large enough power per mode. Instead, in the normalized model, the TMSV state shows a quantum advantage for any parameter value, unless $N_B=0$.

\section{Quantum hypothesis testing}\label{sec:VI}
Quantum hypothesis testing is the discrete version of quantum parameter estimation. It consists in the discrimination between two values of a system parameter, by sending a quantum state as probe. Given a $\eta$-dependent channel $\mathcal{E}_\eta$, discriminating between the values $\eta=\eta_+$ and $\eta=\eta_-$ ($\eta_+>\eta_-$) using $M$ copies of the state $\rho$ as a probe results in the average error probability
\begin{equation}\label{proberr}
     P_{\rm err}=\frac{1-\frac{1}{2}\|\rho_{\eta_+}^{\otimes M}-\rho_{\eta_-}^{\otimes M}\|_1}{2},
\end{equation}
where $\rho_\eta=\mathcal{E}_\eta[\rho]$, $\|\cdot\|_1$ is the trace norm. Here, we have assumed equal {\it a-priori} probabilities for the two hypotheses, but the discussion can be trivially generalized to the asymmetric setting. Generally, the quantity in Eq.~\eqref{proberr} is challenging to compute. In addition, saturating the equality in Eq.~\eqref{proberr} requires one to collectively measure the $M$ output copies of the channel, which in most cases is not implementable with current technology. In the following, we discuss a simple bound based on the QFI.

We first recall that the QFI can be generally written as 
\begin{align}\label{QFI_fidelity}
    I_\eta = \lim_{d\eta \rightarrow0}\frac{8}{d\eta^2}\left[1-\sqrt{F(\rho_\eta,\rho_{\eta-d\eta})}\right],
\end{align}
where $F(\rho,\sigma)=[\text{Tr}\,(\sqrt{\rho\sqrt{\sigma}\rho})]^2$ is the fidelity between the states $\rho$ and $\sigma$. We can now use this relation to bound the optimal discrimination error probability as~\cite{Pirandola2017}
\begin{align}\label{QFI_hyp_bound}
    P_{\rm err}\leq \frac{1}{2}\sqrt{\left[F(\rho_{\eta_+},\rho_{\eta_-})\right]^{M}}\simeq \frac{1}{2}\mathrm{e}^{-Md\eta^2 I_\eta/8},
\end{align}
where we have defined $d\eta=\eta_+-\eta_-$, and the approximation holds for $d\eta^2I_\eta \ll1$. The bound in Eq.~\eqref{QFI_hyp_bound} is achievable by measuring the $M$ copies of the output state separately, and then applying a threshold discrimination strategy~\cite{Calsamiglia2008, Sanz2017}. We can optimally estimate the parameter $\eta$, obtaining a value $\eta_{\rm est}$. We then decide towards the hypothesis $\eta=\eta_+$ if $\eta_{est}>kd\eta$ with $0<k<1$, or $\eta=\eta_-$ otherwise. If $\eta_+$ and $\eta_-$ are sufficiently close, then the optimal choice is $k=1/2$. If $M$ is large enough, the error probability can be approximated as $P_{\rm err}\simeq 1-\text{erf}~(\sqrt{d\eta^2 I_\eta M/8})\simeq \frac{1}{2}\mathrm{e}^{-Md\eta^2 I_\eta/8}$ for large enough $Md\eta^2I_\eta$. This strategy saturates the bound in Eq.~\eqref{QFI_hyp_bound}.

An important observation is about the number of copies needed to achieve an exponential decay of the fidelity and the error probability in the input power. The fidelity between two $n$-mode Gaussian quantum states has the following structure:
\begin{align}\label{fidelity:structure}
    F(\rho,\sigma)\sim \frac{1}{{\rm poly}(n)}\exp\left\{{-{\bf \delta}^\top[{\bf \Sigma}_\rho+{\bf \Sigma}_\sigma]^{-1}{\bf \delta}}\right\},
\end{align}
where ${\bf \delta} = {\bf d}_\rho-{\bf d}_\sigma$ is the displacement difference between the two states, and ${\rm poly}(n)$ is a polynomial of degree $n$ dependent solely on the covariance matrices~\cite{Banchi2015}. For finite $n$ ({\it e.g.}, $n=2$), in order to have an exponential decay of the error probability with respect to $\mathcal{N}_S$, we need at least one of the following two properties to be fulfilled: (i) A non-zero displacement; (ii) An infinite number of probes ($M=\infty$). It follows that squeezed-vacuum and TMSV states can have an exponential decay of the error probability only in the infinite bandwidth case. As we have shown in the previous sections, the total QFI of these states is actually maximized for $M=\infty$. For a coherent state input, the error probability performance does not depend on the bandwidth choice. For an optimized displaced squeezed state, the QFI of the unnormalized model shows a divergence for $\eta=1$, as shown in Eqs.~\eqref{coheNSlarge}-\eqref{sq:NSlarge}. 
This divergence can be seen at the error exponent for any choice of $M$ (including $M=1$), as one can readily check using Eq.~\eqref{fidelity:structure}. Indeed, for these two states,
Eq.~\eqref{QFI_hyp_bound} holds also for $M=1$ and $d\eta^2I_\eta\gtrsim1$.
In the following, we consider the lossy channel introduced in Eq.~\eqref{Lindblad_lossy}, that can be rewritten as $\mathcal{E}_\eta=\mathrm{e}^{-2\ln(\eta)\mathcal{L}}$, where $\mathcal{L}[\rho]=(1+N_B)\mathcal{D}(a_S)[\rho]+N_B\mathcal{D}(a_S^\dag)[\rho]$. We now discuss the exemplary cases of quantum illumination and quantum reading.

\subsection{Quantum illumination}

In quantum illumination (QI), a faster decay rate in the probability of error can be achieved with an entangled probe. This makes QI an important illustration of a quantum advantage that ``survives''  an entanglement breaking channel, which is the case for $\eta^2<\frac{N_B}{1+N_B}$. In particular, a lot of interest has been raised for $N_B\gg1$ and $\eta^2\ll1$, where the TMSV state shows a relevant quantum advantage in the error probability exponent~\cite{Tan2008}. Indeed, this may have applications in radar-like remote sensing in the microwave regime~\cite{Barzanjeh2015, LasHeras2017}, where $N_B$ is of the order of thousands of photons in a room temperature environment.
%The appearance of an advantage of QFI in this regime is not surprising because there is a close relation between the Chernoff coefficient (itself bounded by the Battacharyya bound) and fidelity. Likewise, there is an equivalent relation between QFI and fidelity.
%... addition of an ancilla system is in the lossy $\eta\ll1$ (but non-zero) and noisy regime $N_B\gg 1$ where, without an ancilla, the coherent state optimal.
As shown in the previous section, in this regime the TMSV can realize at most a factor of $2$ advantage in QFI over a coherent state, observed by choosing a low power-per-mode regime for the TMSV state ($\mathcal{N}_S/M=N_S\ll 1$), see Fig.~\ref{fig5}~\cite{Sanz2017,DiCandia2021}. This advantage is observed by $\frac{I_\eta^{\rm TMSV}}{I_\eta^{\rm coh}} = 1+\left(2N_S+1+\frac{N_S+1}{N_B}\right)^{-1}+\mathcal{O}(\eta^2)$. 
%In particular, for QI, the task is to decide which of two possible channels are in effect. 
%

QI is usually studied in a modified setting, with a constant background for all transmissions, \textit{i.e.}, $N_B \rightarrow \frac{N_B}{1-\eta^2}$. This change is done {\it ad hoc} to eliminate the shadow-effect, as shown in the previous section. In a radar scenario, the shadow-effect can be interpreted as an artifact of the considered model~\cite{Tan2008,TanPHD}, which for finite $\eta$ can be relevant. There are regimes where the presence of the shadow-effect is not relevant in QI. Indeed, the normalized and unnormalized models perform the same for $\eta^2N_B\gg N_S\gg1$. However, for $N_S\lesssim \eta^2N_B$ the shadow-effect starts to be relevant for the QFI value. This has consequences also on the optimal receiver. In fact, the optimal TMSV receiver for the normalized model is either a phase conjugate (PC) receiver or an optical parametric amplifier (OPA) receiver~\cite{Guha2009,Sanz2017}. Instead, for the unnormalized model in the $N_S\lesssim \eta^2N_B$ regime, a double homodyne receiver performs better than both the OPA and PC receivers, as shown in Ref.~\cite{Jo2021}. This consideration holds for both the hypothesis testing and the parameter estimation problems. Understanding the right way to model a QI scenario is thus of crucial importance for experiments. Indeed, this shall be done by analyzing a realistic quantum model of wave propagation theory.

Lastly, we notice that the quantum advantage achieved by a TMSV probe is restricted to $\mathcal{N}_S/M=N_S\lesssim1$. Indeed, consider a TMSV where a quantum-limited, large amplification is applied to the signal. In other words, consider the signal mode $a'_S=\sqrt{G}~a_S(t)+{\sqrt{G-1}}~v$, where $G\gg1$ and $v$ is a vacuum mode. Since the advantage is limited to $N_S\lesssim1$, see Fig.~\ref{fig5}, this amplification process adds enough noise to destroy the quantum advantage of $2$ in the SNR. This agrees with the analysis done in Refs.~\cite{Shapiro, Jonsson2020}.

\subsection{Quantum reading}

Quantum reading consists in embedding a bit of information in the reflectivity parameter $\eta$ of a cell~\cite{Pirandola2011}. Since this is thought to be implemented in a controlled environment, the hypothesis testing to retrieve the information is between two different values of $\eta$ close enough to $1$. In the optical case, {\it i.e.}, for $N_B=0$, our results recognize regimes where the squeezed-vacuum state is optimal in discriminating between two values 
of $\eta$ close to each other. Indeed, of particular interest is the analysis done in Fig.~\ref{fig3}, see Section~\hyperref[sec:V]{V} for a discussion. As already mentioned, for large enough values of $\mathcal{N}_S$ an optimized displaced squeezed state shall be the best choice for an experimentalist to get an advantage with respect to a coherent state probe, as both the transmitter and the receiver are less experimentally challenging to implement. The situation changes in the bright environment case, {\it i.e.}, $N_B\gg1$. Here, understanding what model actually describes the experiment is of crucial importance, as the normalized and the unnormalized models give radically different results. The differences are even more evident than in the QI case. In the unnormalized model, we have that the optimal idler-free and entanglement-assisted states show a relevant quantum advantage only when $\eta$ is enough close to $1$, see Fig.~\ref{fig4}. However, this quantum advantage is potentially unbounded, since it relies on the presence of the $(1-\eta^2)$-divergence. Instead, by normalizing the environment with $N_B\rightarrow N_B/(1-\eta^2)$, the TMSV state shows a quantum advantage for any value of $\eta$ with respect to both the coherent state and the optimal idler-free probes, see Eqs.~\eqref{normQFIlargeNS}-\eqref{limitTMSVnorm}. However, this advantage is limited to a factor of 2 in the QFI, achieved for large enough $N_B$.

\section{Conclusion}

In this article, we have characterized the metrological power of energy-constrained Gaussian state probes in the task of estimating the loss parameter of a thermal channel. We have showed that, with access to an entangled idler, the two-mode squeezed-vacuum state is the optimal probe in all regimes. Conversely, in the idler-free scenario, we have showed that the optimal state is generally a non-trivial trade-off between displacement and local squeezing. We have provided analytical results aimed to understand the behaviour of the optimal state in the finite parameter regime. We have considered the problem of optimizing the {\it total} quantum Fisher information, with a constraint on the total input energy. In this context, we have analysed the role of the shadow-effect in getting a quantum advantage, defined by using either single-mode or two-mode squeezing for the state preparation. In addition, we have recognized the main differences between considering the bare lossy channel, and a corresponding normalized channel widely used in remote sensing scenarios. We have shown that a TMSV probe is the optimal probe for both of these channels. However, its advantage with respect to the idler-free case is present for any parameter value only in  the normalized model. We have related these results to topical discrimination protocols, such as quantum illumination and quantum reading. Our results aim to elucidate important aspects of the sensing performance in Gaussian-preserving bosonic channels with both analytical and numerical insights.

\section*{Acknowledgements}
RDC acknowledges support from the
Marie Sk{\l}odowska Curie fellowship number 891517 (MSC-IF
Green-MIQUEC). RJ acknowledges support from the Knut  and  Alice  Wallenberg  (KAW) foundation  for  funding through the Wallenberg Centre for Quantum Technology (WACQT).

\section*{Appendix~A: Idler-free case}\label{app:A}

In this appendix, we discuss several technical details used for deriving the idler-free results in Section~\hyperref[sec:III]{III}. 

\subsection*{A1: Single-mode QFI}\label{app:A1}

Here, we find the QFI used Eq.~\eqref{QFIsingle} for a generic probe in a displaced squeezed-vacuum state parametrized as in Eq.~\eqref{single}. As discussed in the main text, we prepare the probe state with displacement along $\theta = n\pi$ such that $\mathbf{d}=\left(\sqrt{2N_{\rm coh}},0\right)^T$ and  the covariance matrix is ${\bf \Sigma}_S={\rm diag}\left(\frac{r}{2},\frac{1}{2r}\right)$ with  $r=1+2N_{\rm sq}-2\sqrt{N_{\rm sq}(N_{\rm sq}+1)}$. The QFI can be written as sum of three terms:
\begin{align}
    I_\eta^{\rm IF}\left(N_{\rm coh},N_{\rm sq}\right)= I_{\eta,1}^{\rm IF}\left(N_{\rm coh},N_{\rm sq}\right) + I_{\eta,2}^{\rm IF}\left(N_{\rm sq}\right) + I_{\eta,3}^{\rm IF}\left(N_{\rm sq}\right).% I_\eta^{\rm IF}^{\rm shad}+I_{\eta,1}+I_{\eta,2}.
\end{align}
These terms are defined as
\begin{align}
    I_{\eta,1}^{\rm IF}\left(N_{\rm coh},N_{\rm sq}\right) &= \frac{4N_{\rm coh}}{ \eta^2r +\left(1-\eta^2\right)\left(2N_B+1\right)},\\
   I_{\eta,2}^{\rm IF}\left(N_{\rm sq}\right) &= \frac{4N_{\rm sq}\eta^2\left(2N_B+1\right)}{A}\left\{ \frac{\left(N_{\rm sq}+1\right)\left(2N_B+1\right)}{2A+1}-1 \right\},\\
    I_{\eta,3}^{\rm IF}\left(N_{\rm sq}\right) &= \frac{4N_B^2\eta^2}{A}, 
    %I_{\eta,2}\left(N_{\rm sq}\right) &= \frac{4\eta^2\left(N_{\rm sq}+N_B+2N_{\rm sq}N_B\right)^2}{\left(1-\eta^2\right)A},   \\
    %I_{\eta,3}\left(N_{\rm sq}\right) &= \frac{4\eta^2N_{\rm sq}\left(N_{\rm sq}+1\right)\left(2N_B+1\right)^2}{\left(1-\eta^2\right)A+\frac{1}{2}},
\end{align} 
 where  $A = \left(1-\eta^2\right)\left[N_B\left(N_B+1\right)+N_{\rm sq}\eta^2\left(2N_B+1\right)-N_B^2\eta^2\right]$. %Notice that in the limit of large $N_{\rm sq}$, we have that $r  = (4N_{\rm sq})^{-1}-(8N_{\rm sq}^2)^{-1} + \mathcal{O}\left(N_{\rm sq}^{-3}\right)$.

The coherent state performance is due to two terms, {\it i.e.} $I_\eta^{\rm coh} = I_{\eta,1}\left(N_{\rm coh},0\right) +  I_{\eta,3}\left(0\right)$. The
term $I_\eta^{\rm shad} = I_{\eta,3}\left(0\right)$ gives the shadow-effect.
For a squeezed-state probe, we have that $I^{\rm sq}_\eta= I_{\eta,2}\left(N_{\rm sq}\right) + I_{\eta,3}\left(N_{\rm sq}\right)$.
In the large power regime, this can be expressed as
\begin{align}
    I_{\eta}^{\rm sq}= \frac{2(1-\eta^2)^2+2\eta^4}{\eta^2(1-\eta^2)^2}+\mathcal{O}(N_S^{-1}), \quad N_S\to \infty,
\end{align}
which saturates to a $\eta$-dependent value.
%One can then find the dependence with respect to the number of squeezed photons $N_{\rm sq}$ by using the relation $r=1+2N_{\rm sq}-2\sqrt{N_{\rm sq}(N_{\rm sq}+1)}$. 

\subsection*{A2: Properties of $I_\eta^{\rm IF, (0)}$}\label{app:A2}
\subsubsection{Concavity of $I_\eta^{\rm IF, (0)}$}
We have that
\begin{align}
   \frac{1}{2N_S^2\eta^2}& \partial^2_\xi I_\eta^{\rm IF, (0)} = -\frac{8(1-2\eta^2+2\eta^4)}{(1+2N_S \xi\eta^2(1-\eta^2))^3} +\frac{l_1(N_S,\eta,\xi)}{\sqrt{N_S \xi(1+N_S\xi)}[1-2\eta^2(\sqrt{N_S\xi(1+N_S\xi)}-N_S\xi)]^2},
\end{align}
with
\begin{align}
    l_1(N_S,\eta,\xi)&=\frac{2N_S(1-\xi)\eta^2[1-2(\sqrt{N_S\xi(1+N_S\xi)}-N_S\xi)]^2}{\sqrt{N_S\xi(1+N_S\xi)}[1-2\eta^2(\sqrt{N_S\xi(1+N_S\xi)}-N_S\xi)]}-\frac{1-\xi}{\xi(1+N_S\xi)}-4[1-2(\sqrt{N_S\xi(1+N_S\xi)}-N_S\xi)].
\end{align}
We notice that the maximum value of $l_1$ is reached for $\eta=1$ and $N_S\xi\to \infty$, for which $l_1\sim -(1-\xi)/(2N_S\xi^2)$. Therefore $l_1<0$, and  $\partial_\xi^2 I_\eta^{\rm IF, (0)}$ is negative for any parameter values unless $\eta=0$.

\subsubsection{First derivative of $I_\eta^{\rm IF, (0)}$ with respect to $\xi$}
Let us compute $f(\eta,N_S,\xi)=\frac{1}{4N_S}\partial_\xi I_\eta^{\rm IF, (0)}$:
\begin{align}
    f(\eta,N_S,\xi)&=\frac{2N_S(1-\xi)\eta^2\left(\frac{1+2N_S \xi}{2\sqrt{N_S\xi(1+N_S\xi)}}-1\right)}{\left[1-2\eta^2\left(\sqrt{N_S \xi (1+N_S\xi)}-N_S\xi\right)\right]^2}-\frac{2N_S \xi \eta^2 (1-2\eta^2+2\eta^4)}{\left[1+2N_S\xi\eta^2(1-\eta^2)\right]^2} \nonumber \\
    \quad&+\frac{1-2\eta^2+2\eta^4}{(1-\eta^2)(1+2N_S\xi \eta^2(1-\eta^2))}-\frac{1}{1-2\eta^2\left(\sqrt{N_S\xi(1+N_S\xi)}-N_S\xi\right)}.
\end{align}
Let us now study the derivative in $\xi=1$, {\it i.e.}, $f_1(\eta, N_S)=\frac{1}{4N_S}\left(\partial_\xi I_\eta^{\rm IF, (0)}\right)_{|\xi=1}$ it and its derivative with respect to $N_S$:
\begin{align}\label{derNS}
    f_1(\eta,N_S)&= \frac{(1-\eta^2)^2+\eta^4}{(1-\eta^2)[1+2N_S\eta^2(1-\eta^2)]^2}-\frac{1}{1-2\eta^2(\sqrt{N_S(1+N_S)}-N_S)},\\
    \frac{1}{\eta^2}\partial_{N_S} f_1(\eta,N_S) &=-\frac{2\left(\frac{1+2N_S}{2\sqrt{N_S(1+N_S)}}-1\right)}{\left[1+2\eta^2\left(N_S-\sqrt{N_S(1+N_S)}\right)\right]^2}-\frac{4[(1-\eta^2)^2+\eta^4]}{\left[1+2N_S\eta^2(1-\eta^2)\right]^3}<0.
\end{align}
This means that $f_1$ is always decreasing in $N_S$. Notice that $f_1\to -\frac{1}{1-\eta^2}$ for $N_S\to\infty$, and that $f_1(\eta,N_S=0)=\frac{2\eta^2-1}{1-\eta^2}$ is positive for some $\eta>1/\sqrt{2}$.

The function $f$ is singular in $\xi=0$. Its expansion is 
\begin{align}
    f(\eta, N_S,\xi)=\frac{\sqrt{N_S}\eta^2}{\sqrt{\xi}}+\mathcal{O}(1), \quad \xi\to0.
\end{align}

\subsection*{A3: Perturbative analysis of $\bar \eta(N_S)$}\label{app:A3}

Let us find the asymptotic behaviour of $\bar \eta$ for small and large $N_S$.
In the $N_S\gg1$ regime, the expansion of $f$ shows a zero for $\bar \eta\simeq 1$. In order to get further asymptotic terms, we set $\bar \eta\sim 1-\frac{1}{c_1N_S}$ for some $c_1>0$. We have that
\begin{align}\label{eqc1}
    f(\bar \eta,N_S)\sim\frac{c_1(-128-64c_1+c_1^3)}{2(4+c_1)^2(8+c_1)}N_S.
\end{align}
By solving Eq.~\eqref{eqc1} to zero, we find one positive root $c_1\simeq8.86$. Let us now focus on the $N_S\ll1$ regime. Here, from the zeroth expansion of $f$ around $N_S=0$, we obtain a root for $\bar \eta \simeq \frac{1}{\sqrt{2}}$. By setting $\bar\eta\sim \frac{1}{\sqrt{2}}+c_2\sqrt{N_S}$ for some $c_2\in\mathbb{R}$, we obtain
\begin{align}\label{eqc2}
    f(\bar \eta,N_S)\sim2(2\sqrt{2}c_2-1)\sqrt{N_S}.
\end{align}
Solving Eq.~\eqref{eqc2} to zero, we get $c_2=\frac{1}{2\sqrt{2}}\simeq0.35$. Higher order expansions can be obtained by iterating this procedure. 

\subsection*{A4: Asymptotic expansion for $\xi^{\rm opt}$}\label{app:A4}

%\subsubsection{$N_B=0$ case}

%In order to obtain the asymptotic expansion for the optimal $\xi$, we first perform the Taylor expansion of $I_\eta^{\rm IF, (0)}$ with respect to $\xi N_S$, obtaining
%\begin{align}
% I_\eta^{\rm IF, (0)}=\frac{4N_S(1-\xi)}{1-\eta^2}\left[1-\frac{\eta^2}{4N_S\xi(1-\eta^2)}\right]+\frac{2(1-2\eta^2+2\eta^4)}{\eta^2(1-\eta^2)^2}+{ \scriptstyle \mathcal{O}}(1),
%\end{align}
%which holds for $\xi N_S\gg\eta^2/(1-\eta^2)$. By setting the first derivative with respect to $\xi$ to zero, and solving for $\xi$, we obtain the optimal value $\xi^{\rm opt}\sim \eta/[4N_S(1-\eta^2)]^{1/2}$.

%\subsubsection{$N_B>0$ case}
The Taylor expansion for large $\xi N_S$ is
\begin{align}
    I_\eta^{\rm IF}=\frac{4N_S(1-\xi)}{(1-\eta^2)(1+2N_B)}\left[1-\frac{\eta^2}{4N_S\xi(1-\eta^2)(1+2N_B)}\right]+\frac{2(1-2\eta^2+2\eta^4)}{\eta^2(1-\eta^2)^2}+{ \scriptstyle \mathcal{O}}(1),
\end{align}
which holds for $\xi N_S\gg\eta^2/[(1-\eta^2)(1+2N_B)]$ and $N_S\gg N_B$. By setting the derivative with respect to $\xi$ to zero and solving for $\xi$, we obtain $\xi^{\rm opt}\sim \eta/[4N_S(1-\eta^2)(1+2N_B)]^{1/2}$. This includes the $N_B=0$ case discussed in Eq.~\eqref{coheNSlarge}.

\subsection*{A5: Abrupt change of $\xi^{\rm opt}$ for $N_S\ll 1$}\label{app:A5}
The expansion of the QFI $I_\eta$ in the limit of small $N_S$ is
\begin{align}\label{exp_NSsmall}
    I_\eta^{\rm IF}=I_{\rm shad}+4N_S\left\{\frac{1-\xi}{1+2N_B(1-\eta^2)}+\xi g_2(\eta,N_B)\right\}+\mathcal{O}(N_S^{3/2}),
\end{align}
where
\begin{equation}
    g_2(\eta,N_B)=\frac{(1+2N_B)\eta^2(2-\eta^2-2N_B(1-\eta^2))}{(1-\eta^2)(1+N_B(1-\eta^2))}-\frac{4(1+2N_B)^2\eta^2}{1+\left[1+2N_B(1-\eta^2)\right]^2}.
\end{equation}

In this regime, the optimal $\xi$ is either $\xi^{\rm opt}=1$ or $\xi^{\rm opt}=0$. The expansion holds as long as $N_S\ll N_B$. 

By setting $N_B=0$ in \eqref{exp_NSsmall} we get 
\begin{align}
    I_\eta^{\rm IF}\simeq I_{\rm shad}+4N_S\left\{1-\xi+\frac{\xi \eta^4}{1-\eta^2}\right\}+\mathcal{O}(N_S^{3/2}).
\end{align}
Here, $\xi^{\rm opt}=1$ if $\eta^4+\eta^2-1>0$, which happens for $\eta\gtrsim 0.786$. This is in contrast with what we found in Eq.~\eqref{smallN_S} in the $N_B=0$ case, {\it i.e.}, $\eta>1/\sqrt{2}\simeq 0.707$, since the latter holds in the $N_S\gg N_B$ regime.

For large $N_B$, we have that 
\begin{align}
    g_2(\eta,N_B)&= \frac{4\eta^2}{N_B(1-\eta^2)^3}\left\{\frac{3(1+\eta^2)}{N_B(1-\eta^2)}-2\right\}+\mathcal{O}(N_B^{-3}),\\
  \frac{1}{1+2N_B(1-\eta^2)}&= \frac{1}{N_B(1-\eta^2)}\left\{2-\frac{1}{N_B(1-\eta^2)}\right\}+\mathcal{O}(N_B^{-3}).
\end{align}
By setting $\eta^2=1-\varepsilon$ and solving for $\varepsilon$ small, we find that $ g_2(\eta,N_B)=\frac{1}{1+2N_B(1-\eta^2)}$ for $\eta\simeq 1-\frac{3}{2N_B}$.

\subsection*{A6: Non-commuting limits}\label{app:A6}
In the asymptotic QFI analysis, we have several situations where two limits of the QFI do not commute. Indeed, by changing their order, we get a different result. In the following, we show how to interpret this feature using an example for the squeezed-vacuum state.

Let us consider the limits $\eta\to0$ and $N_B\to0$. These two limits do not commute, as $\lim_{\eta\to0}\lim_{N_B\to0}I_{\eta}^{\rm sq}=4N_S$ and $\lim_{N_B\to0}\lim_{\eta\to0}I_{\eta}^{\rm sq}=0$. Since $I_\eta^{\rm sq}=l_2(N_S,N_B)\eta^2+\mathcal{O}(\eta^4)$ for $\eta\to0$, with $l_2(N_S,N_B)=\mathcal{O}(N_B^{-1})$, we have that the limit $\lim_{N_B\to0}\lim_{\eta\to0}I_{\eta}^{\rm sq}=0$ is valid in the $1\gg N_B\gg\eta^2$ regime. More generally, if we first set $N_B=a\eta^2$ and then we expand at $\eta=0$, we have that $I_{\eta}^{\rm sq}=\frac{4N_S^2}{a+N_S}+\mathcal{O}(\eta^2)$, and the two limit orders are retrieved by considering either $a\ll1$ or $a\gg1$.  This approach is general and can be used to solve similar scenarios. 
Qualitatively, one can say that taking one limit before the other means that the first parameter reaches the asymptotic value faster than the second one.

\section*{Appendix~B: Entanglement-assisted case}\label{app:B}
In this appendix, we discuss the entanglement-assisted case. We show the details to prove that the TMSV state is the optimal probe (Section~\hyperref[sec:IV]{IV}).

\subsection*{B1: Canonical form of the generic pure-state probe}\label{app:B1}
The following Lemma sets the canonical form of the generic mixed probe in the entanglement-assisted case.
\begin{lemma}{\bf [Canonical form of the generic probe]}
The covariance matrix and first-moment for the generic input-state of the channel \eqref{eq1}, in the case of single-mode idler, can be canonically expressed as
\begin{align}\label{canonical}
{\bf d}=\begin{bmatrix}
q \\ p \\ 0\\0
\end{bmatrix}\quad 
{\bf \Sigma}=
\begin{bmatrix}
 a{\bf S}(r) & {\bf R}(\phi){\bf C}\\
 [{\bf R}(\phi){\bf C}]^\top & b\mathbb{I}_2
\end{bmatrix},
\end{align}
where ${\bf S}(r)= {\rm diag}(r,r^{-1})$, ${\bf R}(\phi)=\begin{bmatrix}
\cos(\phi) & -\sin(\phi) \\ \sin(\phi) & \cos(\phi)
\end{bmatrix}$, ${\bf C}={\rm diag}(c_+,c_-)$. Here, all parameters are real and respect the constraints given by the Heisenberg relation ${\bf \Sigma}+\mathrm{i}{\bf \Omega}/2\succeq0$.
\end{lemma}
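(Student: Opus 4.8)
The plan is to exploit the operations that leave the QFI for $\eta$ invariant and to show that any input can be brought to the form~\eqref{canonical} by such operations; restricting to canonical probes is then without loss of generality. Two families of $\eta$-independent, hence QFI-invariant, unitaries are available. First, any local Gaussian unitary on the idler (a symplectic together with a displacement) commutes with $\mathcal{E}_\eta$, since the channel acts as the identity on $I$: applying it before the channel equals applying it after, so by invariance of the QFI under $\eta$-independent unitaries the value is unaffected. Second, a local phase rotation ${\bf R}(\theta)$ on the signal is free as well. At the level of Eqs.~\eqref{eq:displdyn}--\eqref{eq1} the channel adds the \emph{isotropic} noise $y(\eta)\mathbb{I}_2$, so that ${\bf R}(\theta)\left(\eta^2{\bf \Sigma}_S+y\mathbb{I}_2\right){\bf R}(\theta)^\top = \eta^2{\bf R}(\theta){\bf \Sigma}_S{\bf R}(\theta)^\top+y\mathbb{I}_2$; hence ${\bf R}(\theta)$ commutes with the signal action of the channel exactly. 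It therefore suffices to reduce a generic ${\bf \Sigma}$ to the form~\eqref{canonical} using an arbitrary idler symplectic, a signal/idler rotation pair, and local displacements.

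I then proceed block by block. By Williamson's theorem the reduced idler covariance ${\bf \Sigma}_I$ is symplectically equivalent to $b\mathbb{I}_2$; applying the corresponding free idler symplectic $S_I$ sends ${\bf \Sigma}_I\to b\mathbb{I}_2$ and ${\bf \Sigma}_{SI}\to{\bf \Sigma}_{SI}S_I^\top$, while leaving ${\bf \Sigma}_S$ untouched. Since ${\bf \Sigma}_S$ is a positive symmetric $2\times2$ matrix, a free signal rotation $O_S\in SO(2)$ diagonalizes it; writing its eigenvalues as $\lambda_{1,2}>0$ and setting $a=\sqrt{\lambda_1\lambda_2}$ and $r=\sqrt{\lambda_1/\lambda_2}$ gives ${\bf \Sigma}_S=a\,{\rm diag}(r,r^{-1})$, and the cross-block becomes $M:=O_S{\bf \Sigma}_{SI}S_I^\top$. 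The isotropic block $b\mathbb{I}_2$ is invariant under a residual idler rotation $O_I\in SO(2)$, which is still free and acts as $M\to MO_I^\top$. Using the special singular value decomposition $M={\bf R}(\phi)\,{\rm diag}(c_+,c_-)\,O_R^\top$ with $O_R\in SO(2)$ and \emph{signed} $c_\pm$, and choosing $O_I=O_R^\top$, yields $MO_I^\top={\bf R}(\phi)\,{\rm diag}(c_+,c_-)={\bf R}(\phi){\bf C}$, which is precisely the cross-block of~\eqref{canonical}. A final free idler displacement cancels ${\bf d}_I$, leaving ${\bf d}=(q,p,0,0)^\top$ with the signal displacement $(q,p)$ left general (its angle is optimized later in Lemma~\ref{prop:displacement}).

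The point requiring care is that genuine reflections (determinant $-1$) are not symplectic, so $O_S$ and $O_I$ must lie in $SO(2)$ rather than $O(2)$; this is exactly why I use the special SVD with both orthogonal factors in $SO(2)$ and a signed diagonal, and correspondingly why ${\bf C}={\rm diag}(c_+,c_-)$ is an \emph{unconstrained} real diagonal rather than a positive one. The same step carries the parameter count: a generic cross-block has four real parameters, the single residual rotation $O_I$ removes one, and the surviving three are $(\phi,c_+,c_-)$. The remaining checks are routine: every step is a symplectic congruence, so the Heisenberg constraint ${\bf \Sigma}+\mathrm{i}{\bf \Omega}/2\succeq0$ is preserved throughout, and the pure-state specialization of the main text follows by additionally imposing the saturation of this constraint, which fixes $b=a$ and ties $c_\pm$ to $(a,r)$ as in Eq.~\eqref{eq:purecovariance}. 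I expect the justification of the signal-rotation covariance (relying on the isotropy of the thermal noise) and the $SO(2)$-versus-$O(2)$ bookkeeping in the correlation block to be the only genuinely delicate parts; everything else is a direct application of the Williamson and singular value decompositions.
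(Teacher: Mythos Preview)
Your proof is correct and follows essentially the same route as the paper: use the free idler symplectic (Williamson, i.e.\ rotate-then-squeeze) to make ${\bf \Sigma}_I\propto\mathbb{I}_2$, a free signal rotation to diagonalize ${\bf \Sigma}_S$, then an SVD of the cross-block together with the residual idler rotation to reach ${\bf R}(\phi){\bf C}$, plus an idler displacement to kill ${\bf d}_I$. Your added care about $SO(2)$ versus $O(2)$ and the signed singular values is a welcome refinement of the same argument rather than a different approach.
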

\begin{proof}
Let us denote by $\mathcal{E}_{x,y}$ the channel defined in  Eq.~\eqref{eq1}. We have that  $\mathcal{E}_{x,y}[(\mathcal{R}_S\otimes \mathcal{S}_I)[\rho_{SI}]]=(\mathcal{R}_S\otimes \mathcal{S}_I)[\mathcal{E}_{x,y}[\rho_{SI}]]$, where $\mathcal{R}_S$ is a generic rotation applied on the signal, $\mathcal{S}_I$ is a generic symplectic transformation applied to the idler, and $\rho_{SI}$ is a generic signal-idler state. Therefore, given a generic state $\rho_{SI}$, its covariance matrix and first-moment vector can be brought to the form in Eq.~\eqref{canonical} by applying the following operations in series. (i) Displace the idler mode in order to set  ${\bf d}_I=(0,0)^T$. (ii) Rotate the idler mode to diagonalize ${\bf \Sigma}_I$. (iii) Squeeze the idler mode to make ${\bf \Sigma}_I$ proportional to the identity. (iv) Rotate the signal to diagonalize ${\bf \Sigma}_S$. The resulting covariance matrix is ${\bf \Sigma}=\begin{bmatrix}
a{\bf S}(r) & {\bf \Sigma}_{SI} \\ {\bf \Sigma}_{SI}^\top & b\mathbb{I}_2
\end{bmatrix}$ for some ${\bf \Sigma}_{SI}$. We can decompose with the singular value decomposition, \textit{i.e.},  ${\bf \Sigma}_{SI}={\bf R}(\phi){\bf C}{\bf R}^\top(\bar\phi)$ for some $\phi$ and $\bar \phi$. Finally, (v) Apply a rotation ${\bf R}(\bar \phi)$ to the idler mode.
\end{proof}

In the following we assume $c_+\geq c_-$.
The state of Eq.~\eqref{canonical} still has too many free parameters to allow for full analytical and/or numerical treatment. We apply a similar procedure as before to constrain the parameters and use convexity of the QFI to take the optimal probe as pure. However, physicality conditions impose  constraints that we will exploit to further restrict the free parameters. To optimize the covariance matrix of the input state, we start by studying the symplectic invariant Selerian for the generic state in Eq.~\eqref{canonical}, which is
\begin{equation}
    \Delta \equiv \det{\bf \Sigma}_S+\det{\bf \Sigma}_I+2\det{\bf \Sigma}_{SI} = a^2+b^2+2c_+c_-.\label{eq:selarian}
\end{equation}
Since for a two-mode pure state $\Delta=\frac{1}{2}$~\cite{Serafini:book} and $a,b\geq\frac{1}{2}$, we have that either $c_+>0$ and $c_-<0$  or $c_+=c_-=0$. The special cases $c_+>0$ with $c_-=0$ and $c_-<0$ with $c_+=0$, do not allow for a positive definite covariance matrix. 

\begin{lemma}\label{lem:aeqb}
A pure-state on the form of Eq.~\eqref{canonical} has $a=b$.
\end{lemma}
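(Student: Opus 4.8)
The plan is to avoid any brute-force determinant computation and instead exploit the defining property of a pure bipartite state: by the Schmidt decomposition the two reduced (marginal) states $\rho_S$ and $\rho_I$ share the same spectrum, and hence the same von Neumann entropy. Since both marginals of the state in Eq.~\eqref{canonical} are single-mode Gaussian states, each of their entropies is fixed by a single symplectic eigenvalue, so equating the two entropies translates directly into a relation between $a$ and $b$.

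First I would read off the marginals from Eq.~\eqref{canonical}: the signal marginal has covariance ${\bf \Sigma}_S = a{\bf S}(r)$ and the idler marginal has ${\bf \Sigma}_I = b\mathbb{I}_2$. For a single-mode Gaussian state the unique symplectic eigenvalue is $\nu = \sqrt{\det {\bf \Sigma}_{\rm mode}}$. Since $\det {\bf S}(r) = r\cdot r^{-1} = 1$, this gives $\nu_S = \sqrt{\det {\bf \Sigma}_S} = a$ and $\nu_I = \sqrt{\det {\bf \Sigma}_I} = b$, where the positive roots are the relevant ones because $a,b\geq \tfrac12 > 0$.

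Next I would invoke that the single-mode von Neumann entropy $S(\nu)$ is a strictly increasing function of $\nu$ on $[\tfrac12,\infty)$. Because the global state is pure, the Schmidt decomposition forces $S(\rho_S)=S(\rho_I)$, that is $S(\nu_S)=S(\nu_I)$; by strict monotonicity this yields $\nu_S=\nu_I$, and therefore $a=b$.

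The only genuine obstacle is conceptual rather than computational: one must be sure that the Gaussian-state entropy really is a strictly monotone function of its single symplectic eigenvalue (standard, and decisive here), and that no subtlety is hidden in the rotation ${\bf R}(\phi)$ or the off-diagonal block ${\bf R}(\phi){\bf C}$. Since the argument uses only the two diagonal marginals together with global purity, the off-diagonal structure is in fact irrelevant, which is what makes this route clean. As a sanity check or alternative, one could instead impose the second purity condition $\det{\bf \Sigma}=\tfrac1{16}$ (complementing $\Delta=\tfrac12$ already obtained from Eq.~\eqref{eq:selarian}) and eliminate the parameters directly, but that computation is considerably heavier and I would keep it only as a backup.
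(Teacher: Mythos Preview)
Your argument is correct and considerably cleaner than the paper's. The paper proceeds by brute force: it first treats the degenerate case $c_+=c_-=0$, then (fixing $\phi=0$, $r=1$) uses the Selerian condition $\Delta=\tfrac12$ to eliminate $c_-$, imposes $\det{\bf\Sigma}=\tfrac1{16}$ to solve explicitly for $c_+^2$, and finally argues that the reality of $c_+$ together with positive-definiteness of ${\bf\Sigma}$ forces $a=b$ (the alternative branches $|a-b|\geq1$ being shown non-physical). An extra sentence then lifts the result to arbitrary $\phi,r$ via an invertible local symplectic on the signal. This is exactly the ``backup'' computation you mention and decline to carry out.

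Your route bypasses all of this: purity plus Schmidt immediately equates the spectra of the two single-mode marginals, and for single-mode Gaussian states the spectrum is fixed by the lone symplectic eigenvalue $\nu=\sqrt{\det{\bf\Sigma}_{\rm mode}}$, whence $a=\nu_S=\nu_I=b$. You could even shortcut the entropy step, since equal spectra already pin down equal $\nu$'s directly; invoking strict monotonicity of $S(\nu)$ is harmless but not strictly needed. The trade-off is that the paper's computation, while heavier, also produces the explicit relations $c_\pm=\pm\sqrt{a^2-\tfrac14}$ used in the subsequent Lemma~\ref{prop:covariance}, whereas your argument yields $a=b$ alone and those relations would still have to be extracted separately.
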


\begin{proof} If $c_+=c_-=0$, then by Eq.~\eqref{eq:selarian} and $\Delta = \frac{1}{2}$ we have that $a=b=\frac{1}{2}$. Assume, instead,  $c_+>0$. Further, assume for now that $\phi =0$ and $r=1$.  Using $\Delta=\frac{1}{2}$ with Eq.~\eqref{eq:selarian}, we solve for $c_-$ as
\begin{equation}
    c_- = \frac{1}{2c_+}\left(\frac{1}{2}-a^2-b^2\right).\label{eq:cminus}
\end{equation}
Fixing the Selarian is  not sufficient for purity. In fact, the determinant of the covariance matrix is
\begin{equation}
\det{\bf \Sigma} = \left(ab-c_-^2\right)\left(ab-c_+^2\right).\label{eq:determinant}
\end{equation}
We use Eq.~\eqref{eq:cminus} and purity with Eq.~\eqref{eq:determinant} to solve for $c_+^2$
such that
\begin{equation}
    c_+^2 = \frac{1}{8ab}\left[a^2\left(a^2+3b^2-1\right)+b^2\left(b^2+3a^2-1\right)\pm(a^2-b^2)\sqrt{\left(\left(a+b\right)^2-1\right)\left(\left(a-b\right)^2-1\right)}\right],\label{eq:cplus}
\end{equation}
where there is an apparent choice of sign depending on the relation between $a$ and $b$. However, the ambiguity is resolved by recognizing that $c_+\in \mathbb{R}$. Since $a,b\geq\frac{1}{2}$ implies $a^2+3b^2\geq 1$ and $b^2+3a^2\geq 1$, reality of $c_+$ depends only on the  square root of Eq.~\eqref{eq:cplus}. This requires that \textit{either}  $a=b$, or
\begin{equation}
    \left[\left(a+b\right)^2-1\right]\cdot\left[\left(a-b\right)^2-1\right] \geq 0.
\end{equation}
which reduces to $a+1\leq b$ or $b\leq a-1$ to ensure $c_+$ is real. However, pure states with $a+1\leq b$ or $b\leq a-1$ are non-physical, with $ab<c_+^2$, since the covariance matrix would not be positive definite. Thus, the  only valid choice is $a=b$. This results holds  for arbitrary $\phi$ and $r$. Application of a rotation $\phi$ followed by squeezing $r$, \textit{i.e.},  ${\bf S}(r){\bf R}(\phi)$, to the signal is an invertible purity-preserving transformation that does not affect ${\bf \Sigma}_I$.
\end{proof} 

We are now entitled to prove Lemma~\ref{prop:covariance}.
\begin{proof}[Proof of Lemma~\ref{prop:covariance}]
Assume  $\phi=0$ and $r=1$.  By Lemma~\ref{lem:aeqb} we substitute $a=b$ in  Eq.~\eqref{eq:cminus} and Eq.~\eqref{eq:cplus} to find  $c_+ =-c_- = \sqrt{a^2-\frac{1}{4}}$. Now the covariance matrix of the probe state is parametrised by $a$ alone, as
\begin{equation}
    {\bf \Sigma}_S = {\bf \Sigma}_I = {\rm diag}\left(a,a\right), \quad {\bf \Sigma}_{SI} = \sqrt{a^2-\frac{1}{4}} \, {\rm diag}\left(1,-1\right).
\end{equation}
Application of ${\bf S}(r){\bf R}(\phi)$ to the signal gives the stated covariance matrix.
 \end{proof}  

\subsection*{B3: Two-mode QFI}\label{app:B2}
The QFI with the support of an entangled ancilla mode is computed from Eq.~\eqref{eq:QFIgeneral} for the canonical two-mode pure-state probe according to Eq.~\eqref{eq:purecovariance} with displacement $\mathbf{d} = \sqrt{2N_{\rm coh}}(\cos\theta,\sin\theta,0,0)^\top$, transformed as  Eqs.~\eqref{eq:displdyn}--\eqref{eq1}. Explicitly, the expression rather lengthy, but we include it for completeness as%  the QFI is
\begin{equation}
    I_\eta^{\rm EA} = I_{\eta,1}^{\rm EA} + I_{\eta,2}^{\rm EA} ,
\end{equation}
where
\begin{align}
I_{\eta,1}^{\rm EA} =\quad &    \mathrm{Tr}\left[ {\bf L}_2
	\left(\partial_\eta 
	{{\bf \tilde \Sigma}}\right) \right], \\
	=\quad & \frac{4a^2 + 1}{\eta^2}+\frac{2\eta^2}{\left(1-\eta^2\right)^2}+\frac{r}{\eta^2\left[1- N_B\left(N_B+1\right)\left(4a^2-1\right)\left(1-\eta^2\right)^2 \right]} \times\nonumber  \\
    & \times\left\{ \frac{ \left(1-\eta^2\right)N_B\left(N_B + 1\right)\left[4a^2\left(1-\eta^2\right)  + \eta^2 + 1\right]^2\left[4a^2\left(1+2N_B\right)^2 - 1\right]}{2a\eta^2\left(1 + r^2\right)\left(1 + 2N_B\right)+r\left[ \left(1- \eta^2\right)\left(4a^2\left(1+2N_B\right)^2 -1\right)-2\eta^2 \right]} \right.\nonumber \\
    &\left.- \frac{\left[\left(1-\eta^2\right)\left(4a^2 + 1\right)\left(2N_B^2 + 2N_B + 1\right)+2\eta^2\left(1-\eta^2\right)^{-1} \right]^2 - 16a^2\eta^4\left(2N_B + 1\right)^2}{2a\eta^2\left(1+r^2\right)\left(1+2N_B\right)\left(1-\eta^2\right) + r\left[  \left(1-\eta^2\right)^2\left(4a^2 + 1\right)\left(2N_B^2 + 2N_B + 1\right)+2\eta^2  \right]} \right\},
\end{align}
and
\begin{align}    
I_{\eta,2}^{\rm EA} =\quad &  \left(\partial_\eta \tilde {\mathbf{d}}\right)^\top {\bf \tilde \Sigma}^{-1}  \left(\partial_\eta \tilde {\mathbf{d}}\right) \\
=\quad & 8N_{\rm coh}a\left\{\frac{\cos^2\theta}{2a\left(1-\eta^2\right)\left(1+2N_B\right)+r\eta^2}+\frac{\sin^2\theta}{2a\left(1-\eta^2\right)\left(1+2N_B\right)+\eta^2/r}\right\}.
\end{align}
This expression, indeed, does not depend on $\phi$ (introduced in Lemma~\ref{prop:covariance}).

\section*{Appendix~C: Normalized background, $N_B\rightarrow N_B/(1-\eta^2)$}\label{app:C}

With the normalization  $N_B\rightarrow N_B/(1-\eta^2)$, the channel has a constant background noise for all transmissions.  Let us denote the QFI under this change of variables by $I^{\rm norm}_\eta$. In this case, there is no shadow-effect, as $\mathcal{I}^{\rm norm}_\eta(N_S=0)=0$. 
The idler-free QFI is
\begin{equation}
   {I}_\eta^{\rm norm, IF}\left(N_{\rm coh},N_{\rm sq}\right) = {I}_{\eta,1}^{\rm norm, IF}\left(N_{\rm coh},N_{\rm sq}\right) + {I}_{\eta,2}^{\rm norm, IF}\left(N_{\rm sq}\right) ,
\end{equation}
where
\begin{align}
    {I}_{\eta,1}^{\rm norm, IF}\left(N_{\rm coh},N_{\rm sq}\right) &= \frac{4N_{\rm coh}}{r\eta^2+ 2N_B+1-\eta^2}, \\
   % \tilde{I}_{\eta,2}\left(N_{\rm sq}\right) &= \frac{8N_{\rm sq}N_B\eta^2}{B},\\
  {I}_{\eta,2}^{\rm norm, IF}\left(N_{\rm sq}\right) &= \frac{4N_{\rm sq}\eta^2}{B}\left\{ \frac{\left(N_{\rm sq}+1\right)\left(2N_B+1\right)^2}{2B+1}-1 \right\},
\end{align}
with $B = N_B\left(N_B+1\right)+N_{\rm sq}\eta^2\left(2N_B+1\right)-N_{\rm sq}\eta^4$. 

Similarly, the ancilla-assisted QFI using the TMSV as a probe is 
\begin{equation}
    {I}_\eta^{\rm norm, TMSV} = \frac{4N_S\left( N_B+1+ N_S\left(N_B+1-\eta^2\right)\right)}{\left[N_B+1-\eta^2\right]\left[ N_B+1+ N_S\left(2N_B+1-\eta^2\right)\right]}.\label{altchanneltmsv}
\end{equation}

Notice that both QFIs are the same as the unnormalized case for $N_B=0$. The total QFI $\mathcal{I}_\eta^{\rm norm}$ can be computed by just using the relation $N_S=\mathcal{N}_S/M$.

\end{document}